\newtheorem{definition}{Definition}
\newtheorem{proposition}[definition]{Proposition}
\newtheorem{lemma}[definition]{Lemma}
\newtheorem{theorem}[definition]{Theorem}
\newtheorem{corollary}[definition]{Corollary}
\newtheorem{conjecture}[definition]{Conjecture}
\newtheorem{remark}[definition]{Remark}
\newtheorem{example}[definition]{Example}
\newtheorem{question}[definition]{Question}
\def\squareforqed{\hbox{\rlap{$\sqcap$}$\sqcup$}}
\def\qed{\ifmmode\squareforqed\else{\unskip\nobreak\hfil
		\penalty50\hskip1em\null\nobreak\hfil\squareforqed
		\parfillskip=0pt\finalhyphendemerits=0\endgraf}\fi}
\def\endenv{\ifmmode\;\else{\unskip\nobreak\hfil
		\penalty50\hskip1em\null\nobreak\hfil\;
		\parfillskip=0pt\finalhyphendemerits=0\endgraf}\fi}
\newenvironment{proof}{\noindent \textbf{{Proof.~} }}{\qed}
\def\Dbar{\leavevmode\lower.6ex\hbox to 0pt
	{\hskip-.23ex\accent"16\hss}D}
\def\url@leostyle{%
	\@ifundefined{selectfont}{\def\UrlFont{\sf}}{\def\UrlFont{\small\ttfamily}}}
\def\bcj{\begin{conjecture}}
	\def\ecj{\end{conjecture}}
\def\bcr{\begin{corollary}}
	\def\ecr{\end{corollary}}
\def\bd{\begin{definition}}
	\def\ed{\end{definition}}
\def\bea{\begin{eqnarray}}
	\def\eea{\end{eqnarray}}
\def\bem{\begin{enumerate}}
	\def\eem{\end{enumerate}}
\def\bex{\begin{example}}
	\def\eex{\end{example}}
\def\bim{\begin{itemize}}
	\def\eim{\end{itemize}}
\def\bl{\begin{lemma}}
	\def\el{\end{lemma}}
\def\bma{\begin{bmatrix}}
	\def\ema{\end{bmatrix}}
\def\bpf{\begin{proof}}
	\def\epf{\end{proof}}
\def\bpp{\begin{proposition}}
	\def\epp{\end{proposition}}
\def\bqu{\begin{question}}
	\def\equ{\end{question}}
\def\br{\begin{remark}}
	\def\er{\end{remark}}
\def\bt{\begin{theorem}}
	\def\et{\end{theorem}}
\def\btb{\begin{tabular}}
	\def\etb{\end{tabular}}
\newcommand{\nc}{\newcommand}
\def\a{\alpha}
\def\b{\beta}
\def\g{\gamma}
\def\p{\pi}
\def\r{\rho}
\nc{\bbA}{\mathbb{A}} \nc{\bbB}{\mathbb{B}} \nc{\bbC}{\mathbb{C}}
\nc{\bbD}{\mathbb{D}} \nc{\bbE}{\mathbb{E}} \nc{\bbF}{\mathbb{F}}
\nc{\bbG}{\mathbb{G}} \nc{\bbH}{\mathbb{H}} \nc{\bbI}{\mathbb{I}}
\nc{\bbJ}{\mathbb{J}} \nc{\bbK}{\mathbb{K}} \nc{\bbL}{\mathbb{L}}
\nc{\bbM}{\mathbb{M}} \nc{\bbN}{\mathbb{N}} \nc{\bbO}{\mathbb{O}}
\nc{\bbP}{\mathbb{P}} \nc{\bbQ}{\mathbb{Q}} \nc{\bbR}{\mathbb{R}}
\nc{\bbS}{\mathbb{S}} \nc{\bbT}{\mathbb{T}} \nc{\bbU}{\mathbb{U}}
\nc{\bbV}{\mathbb{V}} \nc{\bbW}{\mathbb{W}} \nc{\bbX}{\mathbb{X}}
\nc{\bbZ}{\mathbb{Z}}
\nc{\bA}{{\bf A}} \nc{\bB}{{\bf B}} \nc{\bC}{{\bf C}}
\nc{\bD}{{\bf D}} \nc{\bE}{{\bf E}} \nc{\bF}{{\bf F}}
\nc{\bG}{{\bf G}} \nc{\bH}{{\bf H}} \nc{\bI}{{\bf I}}
\nc{\bJ}{{\bf J}} \nc{\bK}{{\bf K}} \nc{\bL}{{\bf L}}
\nc{\bM}{{\bf M}} \nc{\bN}{{\bf N}} \nc{\bO}{{\bf O}}
\nc{\bP}{{\bf P}} \nc{\bQ}{{\bf Q}} \nc{\bR}{{\bf R}}
\nc{\bS}{{\bf S}} \nc{\bT}{{\bf T}} \nc{\bU}{{\bf U}}
\nc{\bV}{{\bf V}} \nc{\bW}{{\bf W}} \nc{\bX}{{\bf X}}
\nc{\bZ}{{\bf Z}}
\nc{\cA}{{\cal A}} \nc{\cB}{{\cal B}} \nc{\cC}{{\cal C}}
\nc{\cD}{{\cal D}} \nc{\cE}{{\cal E}} \nc{\cF}{{\cal F}}
\nc{\cG}{{\cal G}} \nc{\cH}{{\cal H}} \nc{\cI}{{\cal I}}
\nc{\cJ}{{\cal J}} \nc{\cK}{{\cal K}} \nc{\cL}{{\cal L}}
\nc{\cM}{{\cal M}} \nc{\cN}{{\cal N}} \nc{\cO}{{\cal O}}
\nc{\cP}{{\cal P}} \nc{\cQ}{{\cal Q}} \nc{\cR}{{\cal R}}
\nc{\cS}{{\cal S}} \nc{\cT}{{\cal T}} \nc{\cU}{{\cal U}}
\nc{\cV}{{\cal V}} \nc{\cW}{{\cal W}} \nc{\cX}{{\cal X}}
\nc{\cZ}{{\cal Z}}
\nc{\hA}{{\hat{A}}} \nc{\hB}{{\hat{B}}} \nc{\hC}{{\hat{C}}}
\nc{\hD}{{\hat{D}}} \nc{\hE}{{\hat{E}}} \nc{\hF}{{\hat{F}}}
\nc{\hG}{{\hat{G}}} \nc{\hH}{{\hat{H}}} \nc{\hI}{{\hat{I}}}
\nc{\hJ}{{\hat{J}}} \nc{\hK}{{\hat{K}}} \nc{\hL}{{\hat{L}}}
\nc{\hM}{{\hat{M}}} \nc{\hN}{{\hat{N}}} \nc{\hO}{{\hat{O}}}
\nc{\hP}{{\hat{P}}} \nc{\hR}{{\hat{R}}} \nc{\hS}{{\hat{S}}}
\nc{\hT}{{\hat{T}}} \nc{\hU}{{\hat{U}}} \nc{\hV}{{\hat{V}}}
\nc{\hW}{{\hat{W}}} \nc{\hX}{{\hat{X}}} \nc{\hZ}{{\hat{Z}}}
\nc{\hn}{{\hat{n}}}
\def\diag{\mathop{\rm diag}}
\def\ghz{\mathop{\rm GHZ}}
\def\min{\mathop{\rm min}}
\def\tr{\mathop{\rm Tr}}
\newcommand{\bra}[1]{\langle#1|}
\newcommand{\ket}[1]{|#1\rangle}
\newcommand{\ketbra}[2]{|#1\rangle\!\langle#2|}
\def\Dbar{\leavevmode\lower.6ex\hbox to 0pt
	{\hskip-.23ex\accent"16\hss}D}
\begin{document}
\large

\title{Genuine entanglement, distillability and quantum information masking under noise}

\date{\today}

\pacs{03.65.Ud, 03.67.Mn}

\author{Mengyao Hu}\email[]{mengyaohu@buaa.edu.cn}
\affiliation{School of Mathematical Sciences, Beihang University, Beijing 100191, China}

\author{Lin Chen}\email[]{linchen@buaa.edu.cn (corresponding author)}
\affiliation{School of Mathematical Sciences, Beihang University, Beijing 100191, China}
\affiliation{International Research Institute for Multidisciplinary Science, Beihang University, Beijing 100191, China}

\begin{abstract}
Genuineness and distillability of entanglement play a key role in quantum information tasks, and they are easily disturbed by the noise. We construct a family of multipartite states without genuine entanglement and distillability sudden death across every bipartition, respectively. They are realized by establishing the noise as the multipartite high dimensional Pauli channels. Further, we construct a locally unitary channel as another noise such that the multipartite Greenberger-Horne-Zeilinger state becomes the D{\"u}r's multipartite state. We also show that the quantum information masking still works under the noise we constructed, and thus show a novel quantum secret sharing scheme under noise. The evolution of a family of three-qutrit genuinely entangled states distillable across every bipartition under noise is also investigated.   
\end{abstract}

\maketitle

\section{Introduction}
Bipartite pure entangled states are genuinely entangled (GE) states \cite{PhysRevLett.86.5188}. They play a key role in quantum computing, secret sharing and superdense coding \cite{PhysRevA.77.032321, PhysRevA.97.062309, PhysRevLett.94.060501, PhysRevLett.125.230501, PhysRevLett.111.070501, 2018Experimental}. An explicit protocol has been proposed for faithfully teleporting an arbitrary two-qubit state by using a four-qubit GE state \cite{PhysRevLett.96.060502}. A $(2n+1)$-qubit GE state has been constructed to perform controlled teleportation of an arbitrary $n$-qubit state \cite{PhysRevA.75.052306}. However, Genuine entanglement may be influenced by the  unavoidable noise in environment. \textit{Genuine entanglement sudden death} (GESD) describes that GE states evolve into non-GE states under noise. So it is indispensable to sustain the genuine entanglement of quantum states under noise. We refer to the \textit{GESD-free states} as multipartite quantum states without GESD. We construct such states, and it is the first motivation of this paper.

A bipartite quantum state is entangled or separable \cite{PhysRevLett.77.1413,PhysRevLett.88.187904}. The phenomenon of a finite-time disappearance of entanglement under noise is known as entanglement sudden death (ESD).  For example, ESD is exactly GESD for bipartite states. Bipartite entangled states can be divided into free and bound types \cite{PhysRevLett.80.5239,RevModPhys.81.865}. Free entangled states can be asymptotically distilled into pure-state entanglement under local operations and classical communications (LOCC), while bound entangled states cannot. Only free entangled states are essential ingredients in quantum-information tasks. The notion of distillability sudden death (DSD) describes that free entangled states can evolve into nondistillable (bound entangled or separable) states in finite time under local noise \cite{DSD_SW,DSD_MA}. It is thus desirable to construct DSD-free states, i.e., states without DSD under noise. Actually, two-qutrit DSD-free states have been proposed in \cite{DSD_SW}. We extend the notion of DSD-free states from bipartite states to multipartite states in any dimension. We refer to the later as \textit{multipartite DSD-free states}. As far as we know, little is known about multipartite DSD-free states. We investigate a family of these states which go through the noise characterized by the multipartite high dimensional Pauli channel. This is the second motivation of this paper. 

The family of Pauli channels represents a wide class of noise process \cite{PhysRevLett.98.030501}. The experimental realization of the optimal estimation protocol for a Pauli noise channel has been constructed \cite{PhysRevLett.107.253602}. Though some noise channels are not Pauli channels, they can be well approximated by Pauli channels without introducing new errors \cite{quantumnoise2020, PhysRevA.94.052325}. Quantum noise is connected to quantum measurement and amplification \cite{RevModPhys.82.1155, PhysRevA.59.4249}. Thus it is crucial to characterize quantum noise channels reliably and efficiently. Recently, a protocol is described and implemented experimentally on a 14-qubit superconducting quantum architecture in terms of an estimation of the effective noise over qubit systems \cite{naturenoise}. It thus applies to well-known multiqubit states such as the Greenberger-Horne-Zeilinger (GHZ) state \cite{PhysRevA.78.022110}. It is widely used to split quantum information in secret sharing \cite{PhysRevA.59.1829} and works as a channel in quantum teleportation \cite{DAI2004281, PhysRevLett.96.060502}, as well as quantum information masking in bipartite systems \cite{PhysRevLett.120.230501}. Moreover, quantum information masking plays a key role in quantum secret sharing \cite{PhysRevLett.83.648, PhysRevA.69.052307, PhysRevA.98.062306}. This concept has been extended to $m$-uniform quantum information masking in multipartite systems \cite{feishimask}. This kind of masking allows quantum secret sharing from a "boss" to his "subordinates" since the $m$ subordiantes cannot retrieve the information even if they collaborate.  However, the existing schemes of quantum information masking work for pure states only, and they should be investigated under noise from a practical point of view.

In this paper, we construct a family of GESD-free states and multipartite DSD-free states from GHZ states. We begin with reviewing the definition of GE state, and introduce the multipartite DSD-free states in Definition \ref{def:sDSD-free}. We present the realignment criterion in Lemma \ref{le:realignment} to detect entanglement. We review the fact from Ref. \cite{PhysRevLett.104.210501} about GE states coupled with white noise in Lemma \ref{le:GHZ+I}. It shows that the state $\eta:=p\ketbra{{\rm GHZ}_{d,n}}{{\rm GHZ} _{d,n}}+(1-p)\frac{1}{d^n}\mathbb{I}_{d^n}$ is GE if $p>\frac{3}{d^{n-1}+3}$. In Theorem \ref{thm:GESD_channel} we construct the $n$-partite $d$-dimensonal Pauli channel such that the $d$-level $n$-partite GHZ state becomes the state $\eta$. We show that $\eta$ is GESD-free if $p \in (\frac{3}{d^{n-1}+3},1]$. We prove that the state $\eta$ is distillable across every bipartition, that is, multipartite DSD-free, if and only if $p \in (\frac{1}{1+d^{n-1}},1]$ in Lemma \ref{le:distill_bipartition}. Next in Theorem \ref{thm:ghzBE}, we show that the state $\eta$ is not at the same time GE and positive partial transpose (PPT) across any given bipartition. In Theorem \ref{thm:channelBEFE}, we construct the locally unitary channel such that the $N$-partite GHZ state becomes the D{\"u}r's multipartite state \cite{PhysRevLett.87.230402, Zhu_2010, PhysRevA.70.022322}. The condition for the biseparability of such a state is given in Theorem \ref{thm:dursep}. We also study the quantum information masking under noise in terms of the state $\eta$ and D{\"u}r's multipartite state. Furthermore, we investigate the evolution of a three-qutrit GE state which is distillable across every bipartition under global (i.e., local and collective) noise. The numerical results show that the state will undergo DSD and become PPT.   

The noise characterized by the channels proposed in this paper can be investigated more. These channels are operational in experiment \cite{quantumnoise2020, K2005Distributing}. Many noisy quantum channels in nature are a convex combination of unitary channels \cite{girard2020mixed}. It is important to characterize these channels reliably and efficiently with high precision since it is the central obstacle in quantum computation and secret sharing. The quantum information masking under noise we proposed is capable of being applied in secret sharing and future quantum communication protocols.

The rest of this paper is organized as follows. In Sec. \ref{sec:pre}, we introduce the preliminary knowledge used in this paper. In Sec.  \ref{sec:GESD}, we construct a family of GESD-free states from GHZ states by using the high dimensional Pauli channels. In Sec. \ref{sec:multipartiteDSD-FREE}, we investigate the multipartite DSD-free states constructed from GHZ states. In Sec. \ref{sec:masking}, we investigate quantum information masking under noise. In Sec. \ref{sec:DSD333}, we give an example of a GE state distillable across every bipartition under local and collective noise. Finally, we conclude in Sec. \ref{sec:con}.  

\section{Preliminaries}
\label{sec:pre}
In this section, we introduce the fundamental knowledge used throughout the paper. Let $\bbC^d$ be the $d$-dimensional Hilbert space and $\cH_{A}$ be the  Hilbert space of system A. Given a bipartite state $\r$ on $\cH_{AB}$, the partial trace with regard to system A is defined as $\r_B:=\text{Tr}_A \r:=\sum_i(\bra{a_i}\otimes I_B)\r(\ket{a_i}\otimes I_B):=\sum_i\bra{a_i}_A \r \ket{a_i}_A$, where $\ket{a_i}$ is an arbitrary orthonormal basis in $\cH_{A}$. We call $\r_B$ the reduced density matrix of system B. Next, we review the GE states \cite{PhysRevLett.102.250404, PhysRevLett.104.210501}. Let $\ket{\psi_n}$ be an $n$-partite pure state in the Hilbert space $\cH_{A_1A_2 \dots A_n}=\cH_{A_1}\otimes \cH_{A_2} \otimes \dots \otimes \cH_{A_n}=\bbC^{d_1}\otimes \bbC^{d_2} \otimes \dots \otimes \bbC^{d_n}$. Then $\ket{\psi_n}$ is an $n$-partite GE pure state if $\ket{\psi_j} \neq \ket{a,b} \in \cH_{A_{j_1\dots j_k}}\otimes \cH_{A_{j_{k+1}\dots j_n}}$, 
	where $\{j_1,\dots,j_n\}=\{1,\dots,n\}$. Further, we say that $\r \in \cB(\cH_{A_1A_2 \dots A_n})$ is a mixed $n$-partite GE state if 
\begin{eqnarray}
	\label{eq:mixedGE}
	\r=\sum_{j=1}^r p_j \ketbra{\psi_j}{\psi_j},  
\end{eqnarray}
where $\ket{\psi_j}$ is a pure $n$-partite GE state in every decomposition, $\sum_{j=1}^r p_j=1, p_j\geq 0$. Genuine entanglement sudden death (GESD) describes that GE states evolve into biseparable states, that is, it cannot be written as the form in Eq. \eqref{eq:mixedGE}. In Sec. \ref{sec:GESD}, we shall construct  GESD-free states by using GHZ states.

Next, we present the definition of multipartite DSD-free states. 
\begin{definition}
	\label{def:sDSD-free}
	Multipartite DSD-free states are $n$-partite quantum states without DSD in every bipartition. They are distillable and non-positive partial transpose (NPT) \footnote{If the partial transpose with regard to one system of a bipartite state has at least one negative eigenvalue, then the state has non-positive partial transpose} across every bipartition. 
\end{definition}

Such multipartite DSD-free states will be constructed in Sec. \ref{sec:multipartiteDSD-FREE}. DSD-free states always have NPT. It has been proven that a bipartite state has PPT is non-distillable under LOCC \cite{PhysRevLett.80.5239,DSD_SW}. So PPT entangled states must be bound entangled \cite{horodecki2001distillation,halder2019construction}. However, there are evidences showing that bound entangled states with NPT exist \cite{PhysRevA.61.062313,PhysRevA.61.062312}. It will be investigated in Sec. \ref{sec:DSD333}.

In the following, we introduce the realignment criterion proposed in \cite{Chen2002A}. It provides a necessary criterion for separable states \cite{Chen2002A,CHEN200214} and a method of detecting BE states \cite{RevModPhys.81.865}. We only consider bound entangled states with PPT. Realignment criterion provides a computable necessary criterion to detect separability \cite{Chen2002A}. It will be used to detect entanglement in Sec. \ref{sec:DSD333}.
\begin{lemma}
	\label{le:realignment}
	If an $m \times n$ bipartite density matrix $\r$ is separable, then for the $m^2 \times n^2$ matrix  $(\r)^R_{ij,kl}=\r_{ij,kl}$, one can obtain that $\|\r^R\|\leq 1$.
\end{lemma}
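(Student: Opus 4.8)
The plan is to exploit two structural facts: that the realignment map $\r\mapsto\r^R$ is linear in its entries, and that the trace norm $\norm{\cdot}$ obeys the triangle inequality. Since $\r$ is separable, I would first write it in the standard form $\r=\sum_k p_k\proj{a_k}\ox\proj{b_k}$ with $p_k\geq 0$, $\sum_k p_k=1$, and $\ket{a_k}\in\bbC^m$, $\ket{b_k}\in\bbC^n$ normalized pure states. Because the index reshuffling defining $\r^R$ acts entrywise, it is linear, so $\r^R=\sum_k p_k(\proj{a_k}\ox\proj{b_k})^R$, and the triangle inequality reduces the whole estimate to controlling the trace norm of the realignment of a single product of pure-state projectors:
\begin{eqnarray}
	\norm{\r^R}&\leq&\sum_k p_k\,\norm{(\proj{a_k}\ox\proj{b_k})^R}.
\end{eqnarray}

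The crux is the second step: showing that each term on the right equals exactly $1$. Writing $A:=\proj{a}$ and $B:=\proj{b}$ for a single product state, the density-matrix entries factorize as $\r_{ij,kl}=A_{ik}B_{jl}$, so after realignment the matrix $(\proj{a}\ox\proj{b})^R$ has entries $A_{ik}B_{jl}$ with composite row index $(i,k)$ and composite column index $(j,l)$. This is precisely the outer product $\mathbf{a}\,\mathbf{b}^{T}$ of the vectorizations $\mathbf{a}=\mathrm{vec}(A)$ and $\mathbf{b}=\mathrm{vec}(B)$, hence a rank-one matrix. A rank-one matrix $\mathbf{a}\mathbf{b}^{T}$ has a single nonzero singular value $\norm{\mathbf{a}}_2\norm{\mathbf{b}}_2$, so its trace norm equals $\norm{A}_F\norm{B}_F$, the product of the Frobenius norms of the two projectors.

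Finally, I would close the argument by evaluating those Frobenius norms. Because $\ket{a}$ is normalized, $\norm{A}_F^2=\tr(A^\dagger A)=\tr\!\big((\proj{a})^2\big)=\braket{a}{a}^2=1$, and likewise $\norm{B}_F=1$, so the realignment of every product of pure-state projectors has trace norm exactly $1$. Substituting into the displayed bound yields $\norm{\r^R}\leq\sum_k p_k=1$, as claimed. I expect the main obstacle to be the second step: one must correctly track the two index-reshuffling conventions, recognize the realigned product projector as a rank-one (outer-product) matrix, and invoke the singular-value characterization of the trace norm. Once that identification is in place, the Frobenius-norm evaluation and the convexity argument are routine.
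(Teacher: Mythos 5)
Your proof is correct. Note that the paper never proves Lemma~\ref{le:realignment} itself---it is stated as a known fact imported from the cited work of Chen and Wu---and your argument (decompose the separable state into pure product projectors, observe that realignment turns each $\proj{a}\ox\proj{b}$ into the rank-one outer product $\mathrm{vec}(\proj{a})\,\mathrm{vec}(\proj{b})^{T}$ whose trace norm is $\norm{\proj{a}}_F\,\norm{\proj{b}}_F=1$, then conclude by linearity of realignment and the triangle inequality) is precisely the standard proof of the realignment criterion given in that reference, so there is nothing to reconcile.
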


\section{GESD-FREE States}
\label{sec:GESD}

In this section, we construct a family of GESD-free states in terms of the well-known GHZ state which goes through the multipartite high dimensional Pauli channels. The first main result of this section is Theorem \ref{thm:GESD_channel}. This is supported by Lemma \ref{le:GHZ+I}. We present the keys of the three-qubit state in Example \ref{ex:3qubit}. This is a more implementable case due to the practically realizable $n$-qubit states up to ten \cite{PhysRevA.101.062330, PhysRevX.9.031045, PhysRevLett.117.210502}.

One can show that the pure GE state is NPT across every bipartition.
An example is the $d$-level $n$-partite GHZ state
\begin{eqnarray}
	\label{eq:GHZ}
	\ket{{\rm GHZ}_{d,n}}:={1\over \sqrt d}\sum^{d-1}_{j=0}\ket{j,j,...,j}.
\end{eqnarray}
In practice, the GHZ state will be coupled with white noise. We need to know whether the resulting state is GE. The following fact is from Example 4 in \cite{PhysRevLett.104.210501}. 
\begin{lemma}
	\label{le:GHZ+I}
	Consider the GHZ state with additional isotropic (white) noise,
	\begin{eqnarray}
		\label{eq:GHZ+I}
		\eta:=p\ketbra{{\rm GHZ}_{d,n}}{{\rm GHZ} _{d,n}}+(1-p)\frac{1}{d^n}\mathbb{I}_{d^n}. 
	\end{eqnarray}
	Then $\eta$ is GE for $p>\frac{3}{d^{n-1}+3}$.
\end{lemma}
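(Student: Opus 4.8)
The plan is to certify genuine entanglement of $\eta$ by exhibiting a \emph{genuine multipartite entanglement witness}, i.e.\ a Hermitian operator $W$ with $\tr(W\sigma)\ge 0$ for every biseparable $\sigma$ but $\tr(W\eta)<0$ in the stated range. Since the biseparable set is awkward to handle directly, I would use its outer relaxation by \emph{PPT mixtures}: every biseparable state is a convex combination of states that are PPT across some bipartition, so it suffices to show $\eta$ is not a PPT mixture. Concretely, I look for $W$ that is \emph{fully decomposable}, meaning that for every bipartition $M=A|B$ one can write $W = P_M + Q_M^{T_M}$ with $P_M\ge 0$ and $Q_M\ge 0$; any such $W$ obeys $\tr(W\sigma)\ge 0$ on all PPT mixtures, hence on all biseparable states.

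First I would exploit the large symmetry of $\eta$. The state is invariant under every permutation of the $n$ parties and under the abelian group generated by the shift and phase operators that stabilize $\ket{\mathrm{GHZ}_{d,n}}$. Twirling a candidate $W$ over this group leaves $\tr(W\eta)$ unchanged and forces $W$ into the low-dimensional commutant, spanned by the identity, the projector $\proj{\mathrm{GHZ}_{d,n}}$, and the diagonal population operators. This collapses the search for $W$ to a small semidefinite (essentially linear) program in a handful of parameters, and by permutation symmetry the constraints $W = P_M + Q_M^{T_M}$ for different cuts reduce to one representative per bipartition size $k=1,\dots,\lfloor n/2\rfloor$.

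Next I would evaluate the reduced program. The only nontrivial input is the spectrum of the partial transpose of the GHZ projector across a cut: the diagonal terms $\ketbra{j\cdots j}{j\cdots j}$ are untouched, whereas each coherence pair $\ketbra{i\cdots i}{j\cdots j}+\mathrm{h.c.}$ is sent by $T_M$ into a rank-two block with eigenvalues $\pm 1/d$ (there are $d(d-1)/2$ such blocks). Feeding this spectrum into the positivity requirements $P_M\ge 0$, $Q_M\ge 0$ yields the optimal fully decomposable $W$, and $\tr(W\eta)$ comes out linear in $p$, changing sign exactly at $p=\tfrac{3}{d^{n-1}+3}$; for larger $p$ one gets $\tr(W\eta)<0$, proving $\eta$ is GE. A cruder but self-contained alternative is the fidelity witness $W=\tfrac1d\,\mathbb{I}-\proj{\mathrm{GHZ}_{d,n}}$, valid because the maximal squared overlap of $\ket{\mathrm{GHZ}_{d,n}}$ with any state product across a fixed cut equals the largest squared Schmidt coefficient $1/d$; it certifies GE for $p>\tfrac{d^{n-1}-1}{d^n-1}$, which is enough in some dimensions but not all, so it does not by itself reproduce the stated constant.

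The main obstacle is verifying full decomposability: one must check that $W$ really admits the splitting $W=P_M+Q_M^{T_M}$ with both pieces positive semidefinite \emph{simultaneously for every bipartition}, which is precisely where the partial-transpose spectrum above is needed and where the constant $3$ enters. I expect the delicate point to be controlling the off-diagonal GHZ coherences after partial transposition while keeping the diagonal block positive. Pinning the threshold to the exact value $\tfrac{3}{d^{n-1}+3}$, rather than merely some sufficient bound, additionally requires exhibiting at the boundary an explicit PPT-mixture (indeed biseparable) decomposition of $\eta$, certifying that the witness is optimal; this tightness step, not the witness construction, is the technically heaviest part.
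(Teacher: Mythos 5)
You are trying to prove something the paper never proves: Lemma~\ref{le:GHZ+I} is imported verbatim as a ``fact'' from Example~4 of Ref.~\cite{PhysRevLett.104.210501}, and the argument behind that example is not a witness or SDP construction at all but a one-line Cauchy--Schwarz estimate on matrix elements: for every biseparable $\r$ and every pair $i<j$, $|\bra{i\cdots i}\r\ket{j\cdots j}|\le\sum_{\a}\sqrt{\bra{i_\a j_{\bar\a}}\r\ket{i_\a j_{\bar\a}}\,\bra{j_\a i_{\bar\a}}\r\ket{j_\a i_{\bar\a}}}$, the sum running over all bipartitions $\a$ and $i_\a j_{\bar\a}$ denoting the string with letter $i$ on one side and $j$ on the other. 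Measured against this, the decisive step of your proposal --- the assertion that the optimal fully decomposable witness ``comes out linear in $p$, changing sign exactly at $p=\frac{3}{d^{n-1}+3}$'' --- is never computed; the symmetrization and the partial-transpose spectrum of the GHZ projector are scaffolding around a number you simply postulate. That is the gap, and it is not fillable: applied to $\eta$, the estimate above certifies GE for $p>\frac{2^{n-1}-1}{d^{n-1}+2^{n-1}-1}$, which coincides with $\frac{3}{d^{n-1}+3}$ \emph{only for} $n=3$ (the constant $3$ is the number of bipartitions of three parties; for $n$ parties it must become $2^{n-1}-1$).

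Indeed, for $n\ge4$ no witness of any kind can work in the range you claim, because there $\eta$ itself can be biseparable above $\frac{3}{d^{n-1}+3}$. Take $d=2$, $n\ge 4$ and any $p\le\frac{2^{n-1}-1}{2^n-1}$ (e.g.\ $d=2$, $n=4$, $p=\frac{2}{5}>\frac{3}{11}$). Write $\frac{p}{2}=\sum_{k=1}^{2^{n-1}-1}z_k$ with $0\le z_k\le\frac{1-p}{2^n}$, which is possible precisely in this range, and let $(j_k,\bar{j}_k)$ run over the complementary pairs of mixed bit strings. Then $\eta=\sum_k\s_k+\d$, where $\d$ is a positive diagonal (hence fully separable) matrix and $\s_k=z_k(\ket{0\cdots0}+\ket{1\cdots1})(\bra{0\cdots0}+\bra{1\cdots1})+\frac{1-p}{2^n}(\proj{j_k}+\proj{\bar{j}_k})$. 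Across the cut separating the $0$'s of $j_k$ from its $1$'s, each $\s_k$ is effectively a two-qubit state which is PPT (since $z_k\le\frac{1-p}{2^n}$), hence separable, so $\eta$ is biseparable --- and in particular a PPT mixture, so every fully decomposable witness has nonnegative expectation on it. Your SDP therefore cannot change sign at $\frac{3}{2^{n-1}+3}$; its true threshold is at least $\frac{2^{n-1}-1}{2^n-1}$, and the lemma as printed is a mis-generalization of the cited tripartite example. Two smaller points: your fidelity witness $\frac{1}{d}\mathbb{I}-\proj{{\rm GHZ}_{d,n}}$ is the only fully justified part of the proposal, and for $d=2$ its threshold $\frac{2^{n-1}-1}{2^n-1}$ is in fact exactly the biseparability point by the decomposition above; and your closing worry about exhibiting a biseparable decomposition at the boundary misreads the statement, which claims only a sufficient condition for GE, not an ``if and only if.''
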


We construct a locally unitary channel $\Delta$  such that $\ket{\ghz_{d,n} }$ becomes the state $\eta$ in Eq. \eqref{eq:GHZ+I} when $d=2$ and $n=3$. 
\begin{example}
	\label{ex:3qubit}
	Let $d=2$ and $n=3$, that is, $\ket{\ghz_{2,3} }:={1\over \sqrt 2}\sum^{1}_{j=0}\ket{j,j,j}$. We denote the Pauli matrices as follows,
	\begin{eqnarray}
		\label{ma:Pauli}
		\sigma_x=\bma
		0 & 1 \\
		1 & 0 
		\ema, \quad 
		\sigma_y=\bma
		0 & -i \\
		i & 0
		\ema, \quad
		\sigma_z=\bma
		1 & 0\\
		0 & -1
		\ema.
	\end{eqnarray} 
	Let the Kraus operators be
	\begin{eqnarray}
		\begin{aligned}
			&	K_1=\mathbb{I}_2 \otimes \mathbb{I}_2 \otimes \mathbb{I}_2, & K_2=\mathbb{I}_2 \otimes \mathbb{I}_2 \otimes \sigma_z,\\ \notag
			& K_3=\mathbb{I}_2 \otimes \mathbb{I}_2 \otimes \sigma_x, & K_4=\mathbb{I}_2 \otimes \sigma_z \otimes \sigma_x, \\ \notag
			& K_5=\mathbb{I}_2 \otimes \sigma_x \otimes \mathbb{I}_2,
			& K_6=\mathbb{I}_2 \otimes \sigma_x \otimes \sigma_z, \\ \notag
			& K_7=\sigma_x \otimes \mathbb{I}_2 \otimes \mathbb{I}_2, 
			& K_8=\sigma_x \otimes \mathbb{I}_2 \otimes \sigma_z. 
		\end{aligned}
	\end{eqnarray}
	Let the locally unitary channel $\Delta (\cdot)=pK_1(\cdot)K_1^\dagger+\frac{1-p}{8}\sum_{i=1}^8K_i(\cdot)K_i^\dagger$. We refer to it as the tripartite Pauli channel. One can verify that 
	\begin{eqnarray}
		&&
		\Delta(\ketbra{{\rm GHZ}_{2,3} }{{\rm GHZ}_{2,3} }) \notag \\
		=&&
		pK_1\ketbra{{\rm GHZ}_{2,3} }{{\rm GHZ}_{2,3} }K_1^\dagger+\frac{1-p}{8}\sum_{i=1}^8K_i\ketbra{{\rm GHZ}_{2,3} }{{\rm GHZ}_{2,3} }K_i^\dagger
	\end{eqnarray}  
	is exactly the state in \eqref{eq:GHZ+I} with $d=2$ and $n=3$.
	From Lemma \ref{le:GHZ+I}, we know that $\eta$ is GE if $p \in (\frac{3}{7},1)$.
	\qed
\end{example}

We extend the above channel $\Delta$ to a channel of any $d$ and $n$. We refer to it as the $n$-partite $d$-dimensional Pauli channel. In this position, we present the main result of this section. 

\begin{theorem}
	\label{thm:GESD_channel}
	There exists an $n$-partite $d$-dimensional Pauli channel such that $\ket{\ghz_{d,n} }$ becomes the state $\eta$ in \eqref{eq:GHZ+I}. It is GESD-free when $p \in (\frac{3}{d^{n-1}+3},1]$.
\end{theorem}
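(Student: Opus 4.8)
The plan is to prove this theorem in two parts: first constructing the channel explicitly and verifying it produces $\eta$, then establishing the GESD-free property on the stated interval.

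For the construction, I would generalize the eight Kraus operators of Example \ref{ex:3qubit} to arbitrary $d$ and $n$ using the generalized Pauli (Heisenberg--Weyl) operators $X$ and $Z$ on $\bbC^d$, defined by $X\ket{j}=\ket{j+1 \bmod d}$ and $Z\ket{j}=\omega^j\ket{j}$ with $\omega=e^{2\pi i/d}$. The key structural fact is that the $d^{2(n-1)}$ operators of the form $K=U_1\ox\cdots\ox U_n$, where each $U_k\in\{X^a Z^b\}$ and the product is chosen to stabilize $\ket{\ghz_{d,n}}$ up to the shared tensor structure, act on $\ketbra{\ghz_{d,n}}{\ghz_{d,n}}$ to produce a complete set of mutually orthogonal error syndromes. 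Concretely, I would define $\Delta(\cdot)=p\,K_1(\cdot)K_1^\dagger+\frac{1-p}{m}\sum_{i=2}^{m}K_i(\cdot)K_i^\dagger$ with $K_1=\mathbb{I}_{d^n}$ and the remaining $K_i$ chosen so that $\{K_i\ket{\ghz_{d,n}}\}$ forms (together with the GHZ state itself) a frame whose uniform mixture is exactly the maximally mixed state $\frac{1}{d^n}\mathbb{I}_{d^n}$. The main calculation is verifying that $\sum_i K_i\ketbra{\ghz_{d,n}}{\ghz_{d,n}}K_i^\dagger\propto\mathbb{I}_{d^n}$ on the noise part; this follows because applying all generalized Pauli strings to any fixed pure state and averaging yields the completely depolarizing channel, and one restricts to a subset that still covers every computational basis diagonal entry uniformly and cancels off-diagonal coherences.

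Once the channel output is confirmed to equal $\eta$ in \eqref{eq:GHZ+I}, the GESD-free claim is immediate from Lemma \ref{le:GHZ+I}: that lemma states $\eta$ is GE precisely when $p>\frac{3}{d^{n-1}+3}$. Since the channel sends the pure GHZ state (which is GE) to $\eta$, and $\eta$ remains GE for \emph{every} $p$ in the half-open interval $(\frac{3}{d^{n-1}+3},1]$ (the endpoint $p=1$ being the pure GHZ state itself), the state never crosses into the biseparable regime as the noise parameter varies over this range. Hence no finite-time collapse of genuine entanglement occurs, which is exactly the GESD-free property of Definition-type reasoning from Section \ref{sec:GESD}.

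\textbf{The main obstacle I anticipate} is the explicit construction and counting of the Kraus operators so that the averaged noise term is \emph{exactly} the isotropic white noise $\frac{1}{d^n}\mathbb{I}_{d^n}$ rather than merely depolarizing on a subspace. In the $d=2,n=3$ example, only eight carefully chosen operators suffice (far fewer than the full $4^3=64$ Pauli strings), so the generalization must identify the analogous minimal stabilizer-coset structure: one needs a set of generalized Pauli strings that (i) leaves the GHZ coherence terms invariant under $K_1$, and (ii) redistributes weight uniformly across all $d^n$ basis states while annihilating every off-diagonal entry of the GHZ projector. I would organize this by tracking separately the diagonal action (which must fill the identity uniformly) and the off-diagonal coherence terms (which must sum to zero), likely invoking the fact that summing $\omega^{jb}$ over a complete residue system in $b$ gives $d\,\delta_{j,0}$. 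Verifying both conditions simultaneously with a \emph{minimal} operator count, and confirming the channel is genuinely locally unitary (each $K_i$ a tensor product of unitaries), is where the real work lies.
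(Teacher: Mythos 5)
Your second step is exactly the paper's: once the channel output is identified with $\eta$, the GESD-free claim on $(\frac{3}{d^{n-1}+3},1]$ follows by quoting Lemma \ref{le:GHZ+I}, and you handle that part correctly. The genuine gap is in the first step, which is where the entire content of the theorem lies: the statement is existential, so the explicit list of Kraus operators together with the verification that the noise part equals exactly $\frac{1}{d^n}\mathbb{I}_{d^n}$ \emph{is} the proof, and your proposal never produces either. You state the properties the operator set should have (``chosen so that \dots forms a frame whose uniform mixture is exactly the maximally mixed state'') and you yourself defer the construction (``where the real work lies''), so what is submitted is a plan, not a proof. The one quantitative commitment you do make is also wrong: you say $d^{2(n-1)}$ Pauli strings produce ``a complete set of mutually orthogonal error syndromes,'' but $\bbC^{d^n}$ contains at most $d^n$ mutually orthogonal states and $d^{2(n-1)}>d^n$ for $n\geq 3$; what is needed is a transversal of exactly $d^n$ strings, one for each state in the Pauli orbit of $\ket{\ghz_{d,n}}$.

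The missing construction is available within your own Heisenberg--Weyl framework, and writing it down both closes the gap and improves on the paper. Take the $d^n$ local unitaries
\begin{eqnarray}
K_{(q,i_1,\dots,i_{n-1})}=Z^q\ox X^{i_1}\ox\cdots\ox X^{i_{n-1}},\qquad q,i_1,\dots,i_{n-1}\in\{0,1,\dots,d-1\},\notag
\end{eqnarray}
with $X\ket{j}=\ket{j\oplus1}$ and $Z\ket{j}=e^{2\pi ij/d}\ket{j}$. Then $K_{(q,i_1,\dots,i_{n-1})}\ket{\ghz_{d,n}}=\frac{1}{\sqrt d}\sum_{j}e^{\frac{2\pi i}{d}jq}\ket{j,j\oplus i_1,\dots,j\oplus i_{n-1}}=\ket{a^q_{i_1,\dots,i_{n-1}}}$, which is precisely the orthonormal basis \eqref{eq:basisa}, so the resolution of identity \eqref{eq:Idn} immediately gives $\frac{1}{d^n}\sum_{q,\vec{i}}K_{(q,\vec{i})}\ketbra{\ghz_{d,n}}{\ghz_{d,n}}K_{(q,\vec{i})}^\dagger=\frac{1}{d^n}\mathbb{I}_{d^n}$; no separate bookkeeping of diagonal entries versus off-diagonal coherences (your $\sum_b\omega^{jb}=d\,\delta_{j,0}$ step) is needed, and trace preservation of $\cU(\cdot)=pK_0(\cdot)K_0^\dagger+\frac{1-p}{d^n}\sum_rK_r(\cdot)K_r^\dagger$ is automatic because every $K_r$ is unitary. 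This completed version would genuinely differ from the paper's proof, which places the Fourier matrix $W$ on the first factor and the LOO-type operators $A^+_{j\oplus i_k,j}$ of \eqref{eq:loo+} on the remaining factors: those operators are not unitary for $d\geq3$ (each nontrivial $A^+$ annihilates $d-2$ basis vectors), so your route is the one for which ``locally unitary Pauli channel'' is literally correct. But as it stands, without the displayed construction, the proposal does not prove the theorem.
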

\begin{proof}
	Let 
	\begin{eqnarray}
		\label{eq:basisa}
		&&
		\ket{a_{i_1,i_2,\cdots ,i_{n-1}}^q}=\frac{1}{\sqrt{d}}\sum_{j=0}^{d-1}e^{\frac{2\p i}{d}jq}\ket{j,j\oplus i_1,j\oplus i_2,\cdots,j\oplus i_{n-1}},
	\end{eqnarray}
	where $j\oplus i=(j+i) \mod d$, and $q,j,i_1,i_2,\cdots,i_{n-1}=0,1,\cdots,d-1$.
	By using \eqref{eq:basisa}, the $d^n$-dimensional identity matrix can be expressed as
	\begin{eqnarray}
		\label{eq:Idn}
		&&
		\mathbb{I}_{d^n}=\sum_{q,i_1,i_2,\cdots,i_{n-1}=0}^{d-1} \ketbra{a_{i_1,i_2,\cdots,i_{n-1}}^q}{a_{i_1,i_2,\cdots,i_{n-1}}^q}.
	\end{eqnarray}

 For $d$-level system, a set of local orthogonal observables (LOOs)  \cite{PhysRevLett.95.150504} is a set of $d^2$ observables $A_\mu (\mu=1,2,\cdots, d^2)$ satisfying orthogonal relations $\tr (A_\mu A_\nu)=\delta_{\mu \nu}$, where $\mu,\nu=1,2,\cdots,d^2$. For example, a standard complete set of LOOs is defined as $\{A_\mu\}=\{A_m=\ketbra{m}{m},A_{m,n}^{\pm}\}$, where
	\begin{eqnarray}
	\label{eq:loo}
	&&
	A_{m,n}^{+}=\frac{\ketbra{m}{n}+\ketbra{n}{m}}{\sqrt{2}} \quad (m<n), \\
	&&
	A_{m,n}^{-}=\frac{\ketbra{m}{n}-\ketbra{n}{m}}{i\sqrt{2}} \quad (m<n), \notag
	\end{eqnarray}
	and $m,n=0,1,\cdots,d-1$. For convenience, we denote $A_{m,n}^+$ in Eq. \eqref{eq:loo} as  
	\begin{eqnarray}
	\label{eq:loo+}
	\begin{aligned}
	& A_{j\oplus i,j}^+=\sqrt{2}A_{j\oplus i,j}^+ \quad & ( j\oplus i<j ), \\
	& A_{j\oplus i,j}^+=\sqrt{2}A_{j,j\oplus i}^+ \quad & ( j\oplus i>j),\\
	& A_{j\oplus i,j}^+=\mathbb{I}_d, \quad & ( j\oplus i=j),
	\end{aligned}
	\end{eqnarray}
Let the Fourier matrix $W=\frac{1}{\sqrt{d}}[e^{\frac{2\p i}{d}jq}]_{j,q}$. Suppose the Kraus operators are 
	\begin{eqnarray}
	\label{eq:KrausOperators}
	&&
	K_0= \mathbb{I}_d\otimes \mathbb{I}_d \otimes \cdots \otimes \mathbb{I}_d,  \notag  \\ 
	&&
	K_r=W\otimes A_{j\oplus i_1,j}^+\otimes \cdots \otimes A_{j\oplus i_{n-1},j}^+.
	\end{eqnarray}
	where $r=1,2,\cdots,d^n$. Then using \eqref{eq:Idn} and $\eqref{eq:KrausOperators}$, one can verify that $\sum_{r=1}^{d^n}K_r\ketbra{{\rm GHZ}_{d,n} }{{\rm GHZ}_{d,n} }K_r^\dagger=\mathbb{I}_{d^n}$.	We define the $n$-partite $d$-dimensional Pauli channel $\cU$ as
	\begin{eqnarray}
	\label{eq:channelU}
	\cU (\cdot):=pK_0(\cdot)K_0^\dagger+\frac{1-p}{d^n}\sum_{r=1}^{d^n}K_r(\cdot)K_r^\dagger.
	\end{eqnarray}
	We have 
	\begin{eqnarray}
		\label{eq:r-channel}
		&&
		\cU(\ketbra{{\rm GHZ}_{d,n} }{{\rm GHZ}_{d,n} }) \notag \\
		=&&pK_0\ketbra{{\rm GHZ}_{d,n} }{{\rm GHZ}_{d,n} }K_0^\dagger+\frac{1-p}{d^n}\sum_{r=1}^{d^n}K_r\ketbra{{\rm GHZ}_{d,n} }{{\rm GHZ}_{d,n} }K_r^\dagger. 
	\end{eqnarray}
	One can verify that the state is exactly the state $\eta$ in Eq. \eqref{eq:GHZ+I}. From Lemma \ref{le:GHZ+I}, we know that $\eta$ is GE if $p \in (\frac{3}{d^{n-1}+3},1]$. Thus we can obtain that when $p \in (\frac{3}{d^{n-1}+3},1]$, the state in \eqref{eq:r-channel} is GESD-free. This completes the proof. 	
\end{proof}

\begin{figure}[!h] 
	\center{\includegraphics[width=14cm]  {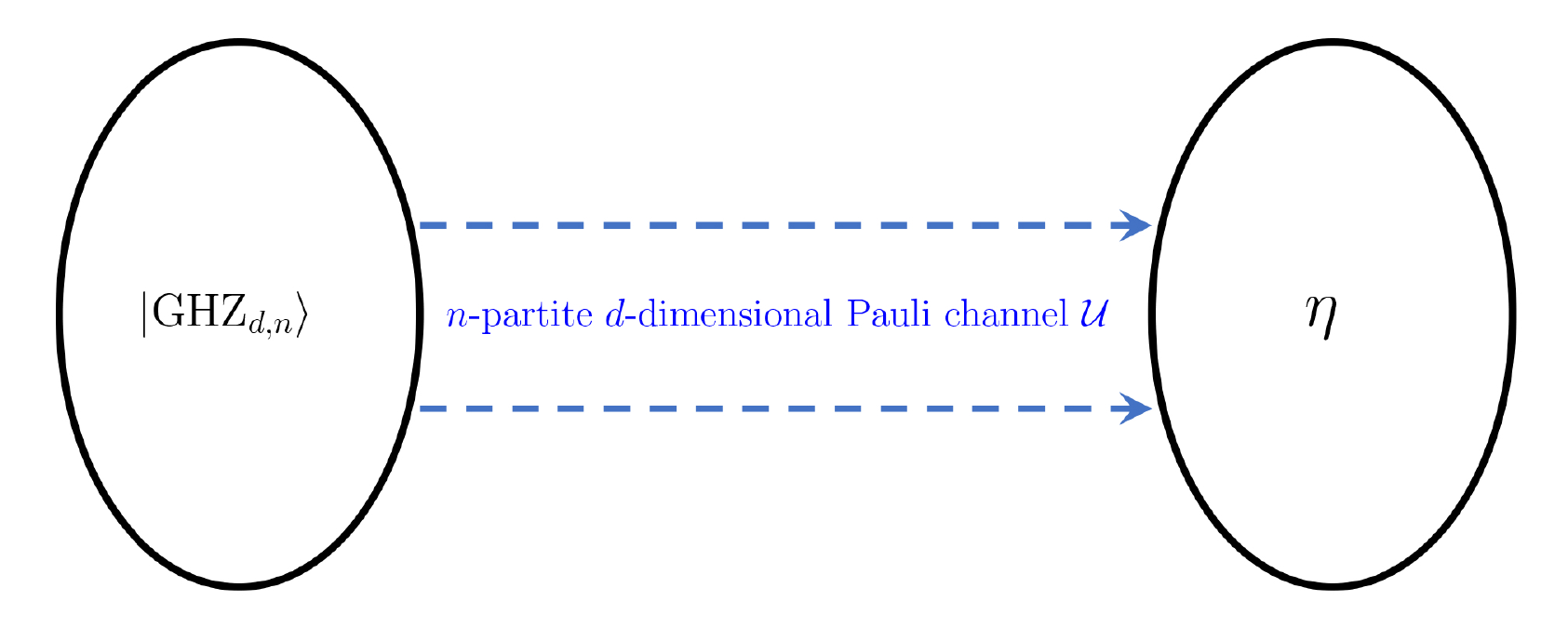}} 
	\caption{\label{fig:channel} The state $\ket{{\rm GHZ}_{d,n}}$ goes through the $n$-partite $d$-dimensional Pauli channel $\cU$ in \eqref{eq:r-channel}, and becomes the state $\eta$ in \eqref{le:GHZ+I}. 
	}  
\end{figure}

We illustrate Theorem \ref{thm:GESD_channel} by FIG. \ref{fig:channel}. By selecting $p \in (\frac{3}{d^{n-1}+3},1]$, we can construct such GESD-free state $\eta$ in experiment. When $d$ or $n$ is large enough, then $\frac{3}{d^{n-1}+3} \rightarrow 0$. Under these circumstances, the state $\eta$ is almost always GE. On the other hand, there has been substantial progress towards the realization of GHZ states and noise in experiment. Recently, the ten-photon GHZ state in experiment has been demonstrated \cite{PhysRevLett.117.210502}. Further, a three-particle GHZ state entangled in three levels for every particle has been realized in 2018 \cite{2018Experimental}. By choosing local basis rotations properly, a complete basis of GHZ states can be constructed. The four qubit GHZ states have been demonstrated experimentally by entangling two photons in polarization and orbital angular momentum \cite{2017Experimental}. Moreover, the multipartite high dimensional Pauli channel in this paper is constructed from LOOs. One can see that they are constructed by the computational basis in $\bbC^d$. Based on LOOs, Ref \cite{PhysRevLett.95.150504} also proposed a family of entanglement witness and corresponding positive maps that are not completely positive. Thus, we provide an operational way to construct a family of GE states in \eqref{eq:r-channel} by the $d$-level $n$-partite GHZ state through locally unitary channels in experiment.

\section{Multipartite DSD-FREE States}
\label{sec:multipartiteDSD-FREE}

In this section, we construct the multipartite DSD-free states under white noise in Lemma \ref{le:distill_bipartition}. We also investigate their genuine entanglement and distillability in Theorem \ref{thm:ghzBE}. It is supported by Lemmas \ref{le:GHZ+I} and \ref{le:distill_bipartition}. Next in Theorem \ref{thm:channelBEFE}, we construct the locally unitary channel such that the  $N$-partite GHZ state becomes the D{\"u}r's multipartite state. Further, the condition for the biseparability of the D{\"u}r's multipartite state is given in Theorem \ref{thm:dursep}.

First, we begin with the distillability of the state $\eta$ in \eqref{eq:GHZ+I}. It will be used in the proof of Theorem \ref{thm:ghzBE}.
\begin{lemma}
	\label{le:distill_bipartition}
	The state $\eta$
	is distillable across every bipartition if and only if $p \in (\frac{1}{1+d^{n-1}},1]$.
\end{lemma}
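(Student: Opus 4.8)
The plan is to reduce the distillability question, across each bipartition, to the sign of a single eigenvalue of the partial transpose, exploiting that $\ket{\ghz_{d,n}}$ looks the same across \emph{every} cut. Fix a bipartition $(S,\bar S)$ of $\{1,\dots,n\}$ and write $\ket{j}_S:=\ket{j}^{\otimes\abs{S}}$ and $\ket{j}_{\bar S}:=\ket{j}^{\otimes\abs{\bar S}}$ for the all-$j$ strings on the two sides, so that
\[
\ket{\ghz_{d,n}}=\frac{1}{\sqrt d}\sum_{j=0}^{d-1}\ket{j}_S\otimes\ket{j}_{\bar S},
\]
a maximally entangled state of Schmidt rank $d$, since $\{\ket{j}_S\}_j$ and $\{\ket{j}_{\bar S}\}_j$ are each orthonormal. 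First I would compute the partial transpose $\eta^\Gamma$ with respect to $\bar S$. Using that $\mathbb{I}_{d^n}$ is invariant and that $\ketbra{\ghz_{d,n}}{\ghz_{d,n}}^\Gamma=\frac1d F_S$, where $F_S$ is the swap acting on the $d\times d$ subspace $\mathcal V:=\mathrm{span}\{\ket{j}_S\}\otimes\mathrm{span}\{\ket{j}_{\bar S}\}$ and zero elsewhere, this gives
\[
\eta^\Gamma=\frac{p}{d}F_S+\frac{1-p}{d^n}\mathbb{I}_{d^n}.
\]
Since $F_S$ has eigenvalues $+1$, $-1$ and $0$, the smallest eigenvalue of $\eta^\Gamma$ is $-\frac pd+\frac{1-p}{d^n}$, which is negative precisely when $p(d^{n-1}+1)>1$, i.e. $p>\frac1{1+d^{n-1}}$. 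Crucially this threshold does not depend on $\abs{S}$, so $\eta$ is NPT across one bipartition iff it is NPT across every bipartition.

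For the ``if'' direction, when $p>\frac1{1+d^{n-1}}$ I would exhibit an explicit Schmidt-rank-two witness of $1$-distillability. Take the antisymmetric vector $\ket\psi=\frac1{\sqrt2}(\ket0_S\ket1_{\bar S}-\ket1_S\ket0_{\bar S})\in\mathcal V$, which has Schmidt rank two across $S|\bar S$ and satisfies $F_S\ket\psi=-\ket\psi$. Then
\[
\bra\psi\eta^\Gamma\ket\psi=-\frac pd+\frac{1-p}{d^n}<0,
\]
so by the standard criterion (a state is $1$-distillable across a cut whenever some Schmidt-rank-two vector has negative expectation under the partial transpose) $\eta$ is $1$-distillable, hence distillable, across this bipartition. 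Because the threshold does not depend on the bipartition, the analogous antisymmetric vector yields distillability across every bipartition simultaneously.

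For the ``only if'' direction I would argue by contraposition: if $p\le\frac1{1+d^{n-1}}$ then $\eta^\Gamma\succeq0$ across every bipartition, so $\eta$ is PPT there, and PPT states are undistillable \cite{PhysRevLett.80.5239}; hence $\eta$ is distillable across no bipartition, in particular not across every one. Combining the two directions yields the claimed equivalence, with the left endpoint excluded because at $p=\frac1{1+d^{n-1}}$ the partial transpose is positive semidefinite. I expect the only real content---and the step most in need of care---to be the structural claim that $\eta^\Gamma$ reduces to the single swap block $F_S$ with the \emph{same} spectrum for every bipartition; once this uniformity is established, the distillable/NPT equivalence here is immediate from the rank-two witness, sidestepping the general difficulty that NPT need not imply distillability.
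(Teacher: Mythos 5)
Your proof is correct, and it reaches the threshold $p>\frac{1}{1+d^{n-1}}$ by a genuinely different route than the paper. The paper never computes $\eta^\Gamma$ on the full space: it first applies the local filter $P$ that projects each side of the cut onto $\mathrm{span}\{\ket{0,\dots,0},\ket{1,\dots,1}\}$, regards the filtered state as a $2\otimes N$ state, and then invokes the theorem that $2\otimes N$ states are distillable if and only if they are NPT; the eigenvalue $-\frac{p}{d}+\frac{1-p}{d^n}$ is read off from the partial transpose of the \emph{filtered} state. You instead diagonalize the partial transpose of the full state, $\eta^\Gamma=\frac{p}{d}F_S+\frac{1-p}{d^n}\mathbb{I}_{d^n}$ with $F_S$ the swap supported on the GHZ block, and certify $1$-distillability with the antisymmetric Schmidt-rank-two vector $\ket{0}_S\ket{1}_{\bar S}-\ket{1}_S\ket{0}_{\bar S}$ (which is exactly the negative eigenvector sitting inside the paper's projected block, so the computational core is the same). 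Your framing buys two things. First, you obtain the complete spectrum of $\eta^\Gamma$, so your ``only if'' direction is airtight: $p\le\frac{1}{1+d^{n-1}}$ makes $\eta$ itself PPT, hence undistillable. The paper's ``only if'' step only shows that $(P\eta P^\dagger)^\Gamma\ge 0$ in that regime, which by itself does not preclude distillability of the unfiltered $\eta$; your version closes that small gap. Second, you replace the $2\otimes N$ distillability theorem by the more primitive Schmidt-rank-two witness criterion for $1$-distillability, so the argument is self-contained apart from that single standard fact plus PPT $\Rightarrow$ undistillable. Both proofs rely on the same symmetry observation — the paper phrases it as equality of all $k$-partite reductions, you phrase it as the threshold being independent of $\abs{S}$ — to pass from one bipartition to all of them.
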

\begin{proof}
	For the state  $\eta_{A_1\cdots A_n}=p\ketbra{\ghz_{d,n} }{\ghz_{d,n} }+(1-p)\frac{1}{d^n}\mathbb{I}_{d^n}$, one can obtain that this $k$-partite reduced density matrices $\eta_{A_{j_1}\cdots A_{j_k}}$ of $\eta_{A_1\cdots A_n}$ for any given $k \in \{1,\cdots,n-1\}$ are the same, where $j_1,\cdots,j_k \in \{1,\cdots,n\}$. Thus it suffices to show that the state $\eta_{A_1 \cdots A_n}$ is distillable across any one of the bipartite cuts $\eta_{A_1|A_2\cdots A_n}, \eta_{A_1A_2|A_3\cdots A_n},\cdots,\eta_{A_1\cdots A_{\lfloor\frac{n}{2}\rfloor}|A_{\lfloor\frac{n}{2}\rfloor+1} \cdots A_n}$.
	
	Let $P=\sum_{i=0}^1 \ketbra{\underbrace{i,\cdots,i}_{k}}{\underbrace{i,\cdots,i}_{k}}\otimes \sum_{j=0}^1 \ketbra{\underbrace{j,\cdots,j}_{n-k}}{\underbrace{j,\cdots,j}_{n-k}} $, where $k=1,2,\cdots,\lfloor\frac{n}{2}\rfloor$. Then we can obtain that
	\begin{eqnarray}
		\label{eq:prhop}
		P(\eta_{A_1\cdots A_n})P^\dagger=&&\frac{p}{d}(\sum_{i=0}^{1}\ket{\underbrace{i,\cdots,i}_n})(\sum_{j=0}^{1}\bra{\underbrace{j,\cdots,j}_n}) \notag \\
		&&+\frac{1-p}{d^n}\sum_{i=0}^1\ketbra{\underbrace{i,\cdots,i}_{k}}{\underbrace{i,\cdots,i}_{k}}\otimes \sum_{j=0}^1\ketbra{\underbrace{j,\cdots,j}_{n-k}}{\underbrace{j,\cdots,j}_{n-k}}.
	\end{eqnarray}
	
	It is a state on $\bbC^2 \otimes \bbC^{2^{n-1}}$. Using the fact that any $2 \otimes N (N \geq 2)$ states are distillable under LOCC if and only if they are NPT \cite{PhysRevLett.78.574}, it suffices to show that the state $P(\eta_{A_1,\dots,A_n})P^\dagger$ in \eqref{eq:prhop} is NPT across any one of the bipartite cuts $\eta_{A_1|A_2\dots A_n}, \eta_{A_1A_2|A_3\dots A_n},\dots,\eta_{A_1\dots A_{\lfloor\frac{n}{2}\rfloor}|A_{\lfloor\frac{n}{2}\rfloor+1} \dots A_n}$.
	
	First we prove the "only if" part. Suppose that $P(\eta_{A_1\cdots A_n})P^\dagger$ is NPT. The partial transpose with regard to system $A_1\cdots A_k$ is
	\begin{eqnarray}
		\label{eq:prhop_trans}
		(P(\eta_{A_1\cdots A_n})P^\dagger)^\Gamma=&&\frac{p}{d}\sum_{i=0}^{1} \sum_{j=0}^{1} \ketbra{\underbrace{i,\cdots,i}_{k},\underbrace{j,\cdots,j}_{n-k}}{\underbrace{j,\cdots,j}_{k},\underbrace{i,\cdots,i}_{n-k}}\notag \\
		&&+\frac{1-p}{d^n}\sum_{i=0}^1\ketbra{\underbrace{i,\cdots,i}_{k}}{\underbrace{i,\cdots,i}_{k}}\otimes \sum_{j=0}^1\ketbra{\underbrace{j,\cdots,j}_{n-k}}{\underbrace{j,\cdots,j}_{n-k}}.
	\end{eqnarray}
	It implies that if $P(\eta_{A_1\cdots A_n})P^\dagger$ is NPT, then $-\frac{p}{d}+(1-p)\frac{1}{d^n}<0$. So we have  $p>\frac{1}{1+d^{n-1}}$.
	
	Second we prove the "if" part. Suppose $p>\frac{1}{1+d^{n-1}}$. Then from \eqref{eq:prhop_trans}, we can obtain that the minimum eigenvalue of $(P(\eta_{A_1\cdots A_n})P^\dagger)^\Gamma$ is negative. 
	
	Therefore, the state $p\ketbra{\ghz_{d,n} }{\ghz_{d,n} }+(1-p)\frac{1}{d^n}\mathbb{I}_{d^n}$
	is distillable across every bipartition if and only if $p \in (\frac{1}{1+d^{n-1}},1]$. This completes the proof.
\end{proof}

Hence, when the state $\ket{{\rm GHZ}_{d,n}}$ goes through the $n$-partite $d$-dimensional Pauli channel $\cU$ in \eqref{eq:r-channel}, the output state $\eta$ is multipartite DSD-free if and only if $p \in (\frac{1}{1+d^{n-1}},1]$. Now we are in the position to prove the main result of this section.

\begin{theorem}
	\label{thm:ghzBE}
The state $\eta$ is not at the same time GE and PPT across any given bipartition.
\end{theorem}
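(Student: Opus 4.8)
The plan is to prove the logically equivalent statement: it suffices to show that whenever $\eta$ is PPT across a prescribed bipartition $A|B$, the state $\eta$ fails to be GE. Because any state that is separable across even a single cut is automatically biseparable, hence not GE, my target reduces to the sharper claim that for this particular family PPT across $A|B$ already forces separability across $A|B$. So the real work is to upgrade ``PPT'' to ``separable'' for $\eta$, rather than merely to ``non-distillable'' as in Lemma \ref{le:distill_bipartition}.

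First I would exploit the rigid structure of $\eta$ across the cut $A=A_1\cdots A_k$, $B=A_{k+1}\cdots A_n$. Setting $\ket{\tilde\jmath}_A=\ket{j}^{\otimes k}$ and $\ket{\tilde\jmath}_B=\ket{j}^{\otimes(n-k)}$, the coherent term $\ketbra{{\rm GHZ}_{d,n}}{{\rm GHZ}_{d,n}}$ is supported entirely on the $d\times d$ ``aligned'' subspace $V_A\otimes V_B$, where $V_A=\lin\{\ket{\tilde\jmath}_A\}$ and $V_B=\lin\{\ket{\tilde\jmath}_B\}$, and there it is exactly the Schmidt-rank-$d$ maximally entangled projector $\proj{\Phi_d^+}$ with $\ket{\Phi_d^+}=\frac{1}{\sqrt d}\sum_j\ket{\tilde\jmath}_A\ket{\tilde\jmath}_B$. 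The white-noise term $\frac{1-p}{d^n}\mathbb{I}_{d^n}$ is block diagonal with respect to the orthogonal splittings $\bbC^{d^k}=V_A\oplus V_A^\perp$ and $\bbC^{d^{n-k}}=V_B\oplus V_B^\perp$, and partial transposition on $A$ preserves these splittings. Consequently $\eta$ and $\eta^\Gamma$ are both block diagonal along this decomposition: on every block other than $V_A\otimes V_B$ the operator is a positive multiple of the identity, hence manifestly PSD and separable, while on $V_A\otimes V_B$ it collapses to an isotropic state $p\,\proj{\Phi_d^+}+\frac{1-p}{d^n}\mathbb{I}_{d^2}$.

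The decisive step is then to invoke the isotropic dichotomy, that for isotropic states positivity of the partial transpose is equivalent to separability. The least eigenvalue of the aligned block of $\eta^\Gamma$ equals $-\frac{p}{d}+\frac{1-p}{d^n}$ --- precisely the partial-transpose computation already performed in Lemma \ref{le:distill_bipartition} --- so $\eta$ is PPT across $A|B$ exactly when $p\le\frac{1}{1+d^{n-1}}$, which is simultaneously the separability threshold of the reduced isotropic state. In that regime one assembles a separable decomposition of $\eta$ across $A|B$ from the explicitly separable isotropic block together with the trivially separable complementary blocks, so $\eta$ is separable across the cut, biseparable, and not GE. The hard part of the argument is exactly this reduction: recognizing that PPT across the cut implies genuine separability, not just non-distillability, and that the isotropic structure on the aligned subspace lets one transport the isotropic PPT$=$separable equivalence through the block decomposition. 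Everything else is bookkeeping; as a consistency check, the guaranteed-GE region $p>\frac{3}{d^{n-1}+3}$ of Lemma \ref{le:GHZ+I} lies strictly inside the NPT region $p>\frac{1}{1+d^{n-1}}$ since $\frac{3}{d^{n-1}+3}>\frac{1}{1+d^{n-1}}$, so no simultaneous GE$+$PPT can occur on that side either.
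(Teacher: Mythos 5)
Your proposal is correct, but it follows a genuinely different route from the paper. The paper's own proof is a pure threshold comparison: Lemma \ref{le:distill_bipartition} gives that $\eta$ is PPT across every bipartition iff $p\le\frac{1}{1+d^{n-1}}$, Lemma \ref{le:GHZ+I} certifies GE for $p>\frac{3}{d^{n-1}+3}$, and since $\frac{3}{d^{n-1}+3}>\frac{1}{1+d^{n-1}}$ the two parameter regions are disjoint. You instead prove the stronger structural claim that PPT across a cut forces separability across that cut, using the block decomposition $\bbC^{d^k}\otimes\bbC^{d^{n-k}}=(V_A\oplus V_A^\perp)\otimes(V_B\oplus V_B^\perp)$, the fact that partial transposition (taken in the computational basis, which contains the aligned vectors) respects these blocks, and the Horodecki--Horodecki equivalence of PPT and separability for isotropic states on the aligned block; your eigenvalue $-\frac{p}{d}+\frac{1-p}{d^n}$ and the resulting threshold match Lemma \ref{le:distill_bipartition}. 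What your route buys is not merely elegance: Lemma \ref{le:GHZ+I} is only a \emph{sufficient} condition for GE, so the paper's comparison, read strictly, shows only that the region where GE has been certified and the PPT region do not overlap; it does not by itself exclude that $\eta$ is GE at some $p\le\frac{3}{d^{n-1}+3}$ inside the PPT region. Your argument closes that loophole, since separability across a single cut unconditionally implies biseparability, hence non-GE, for every PPT value of $p$. What the paper's route buys is brevity and self-containment—it reuses exactly the two lemmas already proved—whereas yours imports one standard external result (PPT isotropic states are separable), which you should cite explicitly if this proof were to replace the paper's.
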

\begin{proof}
	From Lemma \ref{le:distill_bipartition}, we know that the state $\eta=p\ketbra{\ghz_{d,n} }{\ghz_{d,n}}+(1-p)\frac{1}{d^n}\mathbb{I}_{d^n}$ is NPT if and only if it is distillable, and it is distillable across every bipartition if and only if $p \in (\frac{1}{1+d^{n-1}},1]$. Thus it is PPT if and only if $p\leq \frac{1}{1+d^{n-1}}$. From Lemma \ref{le:GHZ+I}, we know that $\eta$ is GE if $p>\frac{3}{d^{n-1}+3}$. Note that  $\frac{3}{d^{n-1}+3}$ is larger than $\frac{1}{1+d^{n-1}}$. Thus $\eta$ cannot at the same time be GE and PPT across any given  bipartition. This completes the proof.
\end{proof}

Lemma \ref{le:distill_bipartition} and Theorem \ref{thm:ghzBE} imply that $\eta$ will always be GE and distillable across every bipartition. It provides an operational way in experiments to construct a family of this kind of states  \cite{PhysRevLett.117.210502, 2018Experimental, 2017Experimental}. We describe Theorem \ref{thm:GESD_channel} and Lemma \ref{le:distill_bipartition} in FIG. \ref{fig:GESD_DSD}. 

\begin{figure}[!h] 
	\center{\includegraphics[width=16cm]  {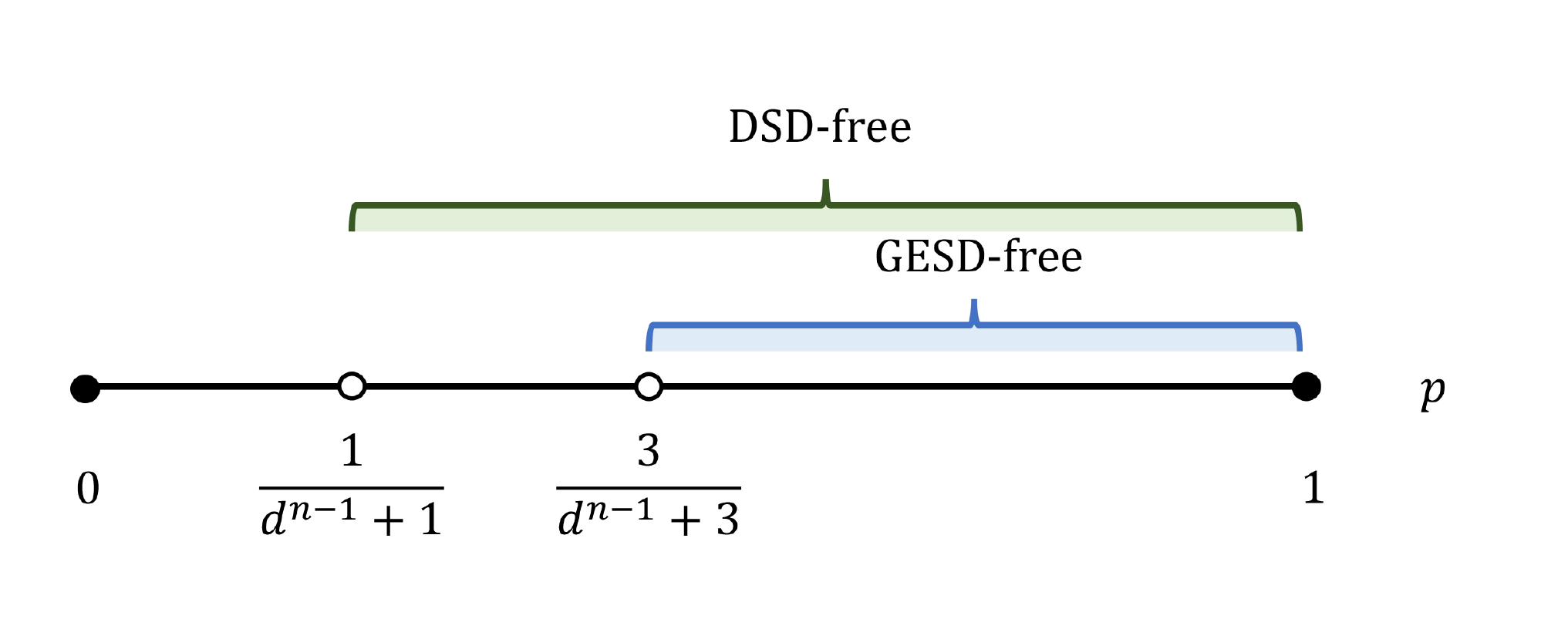}} 
	\caption{\label{fig:GESD_DSD} When the state $\ket{\ghz_{d,n} }$ goes through the locally unitary channel $\cU=pK_0(\cdot)K_0^\dagger+\frac{1-p}{d^n}\sum_{r=1}^{d^n}K_r(\cdot)K_r^\dagger$, the output state $\eta$ in \eqref{eq:GHZ+I} is GESD-free if $p \in (\frac{3}{d^{n-1}+3},1]$ and multipartite DSD-free if $p \in (\frac{1}{1+d^{n-1}},1]$, respectively.
	}  
\end{figure}

So far, we have investigated the GHZ states under white noise. Actually, they may undergo other noise and become other state. For example, the D{\"u}r's multipartite state \cite{PhysRevLett.87.230402, Zhu_2010, PhysRevA.70.022322}
\begin{eqnarray}
	\label{eq:rn}
\r_N(x)=x\ketbra{\Psi_G}{\Psi_G}+\frac{1-x}{2N}\sum_{k=1}^N(P_k+\overline{P}_k),
\end{eqnarray} 
where $\ket{\Psi_G}=\frac{1}{\sqrt{2}}(\ket{0^{\otimes N}}+e^{i\a_N}\ket{1^{\otimes N}})$ is the $N$-partite GHZ state, $P_k$ is the projector onto the product state $\ket{\phi_k}=\ket{0}_{A_1}\ket{0}_{A_2}\cdots\ket{1}_{A_k}\cdots \ket{0}_{A_N}$, and $\overline{P}_k$ is the projector onto the product state $\ket{\varphi_k}=\ket{1}_{A_1}\ket{1}_{A_2}\cdots\ket{0}_{A_k}\cdots \ket{1}_{A_N}$. We may take $e^{i\a_N}=1$ in $\ket{\Psi_G}$ since it can be eliminated by local unitary transformations. In the following we construct the locally unitary channel such that the  multipartite GHZ state becomes D{\"u}r's multipartite state $\r_N(x)$.

\begin{theorem}
	\label{thm:channelBEFE}
	There exists the channel $\Lambda$ such that the $N$-partite GHZ state $\ket{\Psi_G}$ becomes the state $\r_N(x)$ in \eqref{eq:rn}. The state $\r_N(x)$ is bound entangled if $0\leq x \leq \frac{1}{N+1}$ and free entangled if $\frac{1}{N+1} < x \leq 1$.
\end{theorem}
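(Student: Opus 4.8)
The plan is to treat the two claims in turn: first exhibit $\Lambda$ as an explicit mixture of local unitaries, then extract the bound/free threshold from a single partial transpose together with Dür's classification \cite{PhysRevLett.87.230402, PhysRevA.70.022322}.

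For the channel, I would build the Kraus operators entirely from local Pauli strings. Write $\sigma_x^{(k)},\sigma_z^{(k)}$ for the Pauli operators acting on party $A_k$ and the identity elsewhere, and set $L_0=\sqrt{x}\,\mathbb{I}^{\otimes N}$ together with $L_k^{+}=\sqrt{\frac{1-x}{2N}}\,\sigma_x^{(k)}$ and $L_k^{-}=\sqrt{\frac{1-x}{2N}}\,\sigma_z^{(k+1)}\sigma_x^{(k)}$ for $k=1,\dots,N$, where $k+1$ is read modulo $N$ (any fixed $j\neq k$ does the job in place of $k+1$). Since every Pauli is unitary, $L_0^\dagger L_0+\sum_{k=1}^N\big(L_k^{+\dagger}L_k^{+}+L_k^{-\dagger}L_k^{-}\big)=\big(x+2N\cdot\frac{1-x}{2N}\big)\mathbb{I}^{\otimes N}=\mathbb{I}^{\otimes N}$, so $\Lambda(\cdot)=L_0(\cdot)L_0^\dagger+\sum_k\big(L_k^{+}(\cdot)L_k^{+\dagger}+L_k^{-}(\cdot)L_k^{-\dagger}\big)$ is a legitimate locally unitary channel. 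The one calculation that makes it work is that $\sigma_x^{(k)}\ket{\Psi_G}=\frac{1}{\sqrt2}(\ket{\phi_k}+\ket{\varphi_k})$, while the extra phase flip $\sigma_z^{(k+1)}$ (acting on a party that is $\ket0$ in $\ket{\phi_k}$ but $\ket1$ in $\ket{\varphi_k}$) produces $\frac{1}{\sqrt2}(\ket{\phi_k}-\ket{\varphi_k})$; adding the two rank-one outputs cancels the coherences $\ketbra{\phi_k}{\varphi_k}+\ketbra{\varphi_k}{\phi_k}$ and leaves precisely $\frac{1-x}{2N}(P_k+\overline P_k)$. Collecting the $x\proj{\Psi_G}$ term from $L_0$ then reproduces $\r_N(x)$ of \eqref{eq:rn} exactly.

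For the classification I would compute the partial transpose on the one-versus-rest cut $A_1|A_2\cdots A_N$. The only off-diagonal entries of $\r_N(x)$ are the GHZ coherences $\frac{x}{2}\ketbra{0^{\otimes N}}{1^{\otimes N}}$ and their conjugate; transposing $A_1$ moves them to $\frac{x}{2}\ketbra{\phi_1}{\varphi_1}$ and its conjugate, which sit in the span of $\ket{\phi_1}$ and $\ket{\varphi_1}$. On that plane the diagonal already carries weight $\frac{1-x}{2N}$ from $P_1$ and $\overline P_1$, and no other term reaches it, so the relevant isolated block is $\begin{pmatrix}\frac{1-x}{2N}&\frac{x}{2}\\ \frac{x}{2}&\frac{1-x}{2N}\end{pmatrix}$, with eigenvalues $\frac{1-x}{2N}\pm\frac{x}{2}$. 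The lower eigenvalue is negative exactly when $x>\frac{1}{N+1}$, and by permutation symmetry the same holds across every single-party cut.

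The free half is then immediate: for $x>\frac{1}{N+1}$ the state is NPT across the $2\times 2^{N-1}$ cut $A_1|A_2\cdots A_N$, and any $2\times M$ NPT state is distillable \cite{PhysRevLett.78.574}, so $\r_N(x)$ is free entangled. For $0< x\le\frac{1}{N+1}$ it is PPT across every single-party cut, hence not distillable across any of them; since an $N$-party GHZ state is NPT across each such cut and LOCC cannot turn PPT into NPT, no GHZ-type entanglement can be distilled, so the state is not free. The remaining, and genuinely delicate, step is to certify that $\r_N(x)$ is nevertheless entangled for $x>0$, so that it is bound rather than separable; I expect this to be the main obstacle. I would attack it either through the realignment criterion of Lemma \ref{le:realignment}, showing $\|\r_N(x)^R\|>1$, or by importing Dür's dedicated entanglement witness \cite{PhysRevLett.87.230402, PhysRevA.70.022322}, which is the safe route since the realignment bound may weaken for the smallest $x$. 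Combining this entanglement with the single-party PPT property yields the bound-entangled conclusion on the stated range and finishes the proof.
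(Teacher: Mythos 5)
Your proposal is correct, and it differs from the paper's proof in both halves, in each case doing more honest work. For the channel, the paper does not use unitary Kraus operators at all: it takes $K_0=\mathbb{I}^{\otimes N}$, the rank-one operators $K_i=\ket{\phi_k}\bra{0^{\otimes N}}$ and $K_{N+k}=\ket{\varphi_k}\bra{1^{\otimes N}}$, plus a corrective operator $K_{2N+1}=\sqrt{2N}\,K_0-\sum_{i=1}^{2N}K_i$, and then asserts the completeness relation $xK_0^\dagger K_0+\frac{1-x}{2N}\sum_{i=1}^{2N+1}K_i^\dagger K_i=\mathbb{I}_{2^N}$ and the identity $\Lambda(\proj{\Psi_G})=\r_N(x)$. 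Both assertions are delicate because the cross terms generated by $K_{2N+1}$ (between $\ket{\Psi_G}$ and the $\ket{\phi_k},\ket{\varphi_k}$) do not obviously cancel, and as printed they in fact do not; your mixture of local Pauli strings $\sigma_x^{(k)}$ and $\sigma_z^{(j)}\sigma_x^{(k)}$ is trace preserving for free, kills the coherences $\ketbra{\phi_k}{\varphi_k}+\ketbra{\varphi_k}{\phi_k}$ by an explicit two-line cancellation, and is genuinely a ``locally unitary channel'' as the theorem advertises. For the classification, the paper's entire proof is the sentence ``The second claim has been proved in \cite{PhysRevA.70.022322}'', whereas you derive the $x=\frac{1}{N+1}$ threshold from the $2\times2$ block of the partial transpose and settle the free half rigorously via the $2\otimes M$ NPT-distillability theorem, which is exactly the tool the paper itself uses in its Lemma \ref{le:distill_bipartition}.

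The one gap you flag --- certifying entanglement of $\r_N(x)$ in the PPT regime --- you can actually close with your own block method, with no witness and no citation, provided $N\ge4$: take the two-versus-rest cut $A_1A_2|A_3\cdots A_N$; after partial transposition the GHZ coherence lands on the block spanned by $\ket{1,1,0,\dots,0}$ and $\ket{0,0,1,\dots,1}$, and since no $P_k$ or $\overline{P}_k$ populates these vectors the block is $\bigl(\begin{smallmatrix}0 & x/2\\ x/2 & 0\end{smallmatrix}\bigr)$, which has a negative eigenvalue for every $x>0$. So the state is NPT (hence entangled) across that cut while being PPT, hence non-distillable, across every single-party cut: that is precisely multipartite bound entanglement. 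Be aware, however, that the restriction $N\ge4$ and $x>0$ is not removable: at $x=0$ the state is a mixture of product states, and for $N=3$ every bipartition is a single-party cut, so $\r_3(x)$ with $x\le\frac14$ is PPT across all cuts and, by the D{\"u}r--Cirac--Tarrach separability criterion for this GHZ-diagonal family, fully separable rather than bound entangled. That defect sits in the theorem statement (and in the paper's citation-only proof), not in your argument; your suspicion that realignment ``may weaken for the smallest $x$'' was pointing at a real phenomenon, since for $N=3$ no criterion can detect entanglement that is not there.
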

\begin{proof}
	Let the Kraus operators be 
	\begin{eqnarray}
	\label{eq:operatorK}
	&&
	K_0=\mathbb{I}_2 \otimes \cdots \otimes \mathbb{I}_2, \notag \\
	&&
	K_i=(\ket{0}_{A_1}\ket{0}_{A_2}\cdots\ket{1}_{A_k}\cdots \ket{0}_{A_N})(\bra{0}_{A_1}\bra{0}_{A_2}\cdots\bra{0}_{A_k}\cdots \bra{0}_{A_N}), \quad i=1,\cdots,N, \notag \\
	&&
	K_j=(\ket{1}_{A_1}\ket{1}_{A_2}\cdots\ket{0}_{A_k}\cdots \ket{1}_{A_N})(\bra{1}_{A_1}\bra{1}_{A_2}\cdots\bra{1}_{A_k}\cdots \bra{1}_{A_N}), \quad j=N+1,\cdots,2N, \notag \\
	&&
	K_{2N+1}=\sqrt{2N} \cdot K_0 -\sum_{i=1}^{2N}K_i,
	\end{eqnarray}
	where $k=1,\cdots,N$. Suppose the locally unitary channel
	\begin{eqnarray}
		\label{eq:channellambda}
		\Lambda (\cdot):=x K_0(\cdot)K_0^\dagger+\frac{1-x}{2N}\sum_{i=1}^{2N+1}K_i(\cdot)K_i^\dagger,
	\end{eqnarray}
	 where $K_i,i=0,\cdots,2N+1$ are the Kraus operators in \eqref{eq:operatorK}. One can verify that $x K_0^\dagger K_0+\frac{1-x}{2N}\sum_{i=1}^{2N+1}K_i^\dagger K_i=\mathbb{I}_{2^{N}}$. Using the channel $\Lambda$, we can obtain that $\Lambda(\ketbra{\Psi_G}{\Psi_G}) 		=x\ketbra{\Psi_G}{\Psi_G}+\frac{1-x}{2N}\sum_{k=1}^N(P_k+\overline{P}_k)$ is exactly the state $\r_N(x)$ in \eqref{eq:rn}. The second claim has been proved in \cite{PhysRevA.70.022322}. We have proven the assertion.
\end{proof}

This theorem implies that if we change the channel $\Lambda$ by $x$ from $x>\frac{1}{N+1}$ to $x\leq \frac{1}{N+1}$, then the state $\r_N(x)$ will undergo DSD. We can also obtain that when the $N$-partite GHZ state $\ket{\Psi_G}$ goes through the channel $\Lambda$, the output state $\r_N(x)$ is multipartite DSD-free if and only if $x \in (\frac{1}{N+1},1]$. Moreover, if $N$ is large enough, then $\frac{1}{N+1} \rightarrow 0$. Under this circumstance, the state $\r_N(x)$ will always be free entangled. Next, we give the condition for the biseparability of the D{\"u}r's multipartite state $\r_N(x)$.

\begin{theorem}
	\label{thm:dursep}
	The state $\r_N(x)$ is biseparable if $x \in [0,\frac{1}{2}]$.
\end{theorem}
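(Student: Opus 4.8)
The plan is to produce an explicit biseparable decomposition of $\r_N(x)$, writing it as a convex combination of operators each of which is separable (product) across one of the $N$ single-party-versus-rest bipartitions $A_k\mid\overline{A_k}$, where $\overline{A_k}:=A_1\cdots A_{k-1}A_{k+1}\cdots A_N$. Recall that a state fails to be genuinely entangled exactly when it can be written as a mixture of pure states each of which is product across some bipartition; so if I can realize $\r_N(x)=\sum_{k=1}^N\tilde\sigma_k$ with every $\tilde\sigma_k$ separable across $A_k\mid\overline{A_k}$, then expanding each $\tilde\sigma_k$ into product pure vectors exhibits $\r_N(x)$ in the form negating Eq.~\eqref{eq:mixedGE}, i.e.\ biseparable. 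The internal entanglement of the $\overline{A_k}$ factors is irrelevant; only the product structure across the chosen cut matters.

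First I would expand $\ketbra{\Psi_G}{\Psi_G}=\tfrac12\big(\ketbra{0^{\otimes N}}{0^{\otimes N}}+\ketbra{1^{\otimes N}}{1^{\otimes N}}+\ketbra{0^{\otimes N}}{1^{\otimes N}}+\ketbra{1^{\otimes N}}{0^{\otimes N}}\big)$ and split the GHZ weight evenly among the $N$ cuts, defining for $k=1,\dots,N$
\begin{eqnarray}
\tilde\sigma_k=\frac{x}{2N}\big(\ketbra{0^{\otimes N}}{0^{\otimes N}}+\ketbra{1^{\otimes N}}{1^{\otimes N}}+\ketbra{0^{\otimes N}}{1^{\otimes N}}+\ketbra{1^{\otimes N}}{0^{\otimes N}}\big)+\frac{1-x}{2N}\big(P_k+\overline{P}_k\big). \notag
\end{eqnarray}
Summing over $k$ recovers $\sum_{k=1}^N\tilde\sigma_k=x\ketbra{\Psi_G}{\Psi_G}+\frac{1-x}{2N}\sum_{k=1}^N(P_k+\overline{P}_k)=\r_N(x)$, so the whole task reduces to checking separability of each $\tilde\sigma_k$ across $A_k\mid\overline{A_k}$.

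The key observation is that across the cut $A_k\mid\overline{A_k}$ one has $\ket{0^{\otimes N}}=\ket{0}_{A_k}\ket{0^{\otimes(N-1)}}_{\overline{A_k}}$, $\ket{1^{\otimes N}}=\ket{1}_{A_k}\ket{1^{\otimes(N-1)}}_{\overline{A_k}}$, $\ket{\phi_k}=\ket{1}_{A_k}\ket{0^{\otimes(N-1)}}_{\overline{A_k}}$, and $\ket{\varphi_k}=\ket{0}_{A_k}\ket{1^{\otimes(N-1)}}_{\overline{A_k}}$. Hence $\tilde\sigma_k$ is supported on the $2\times2$ sector $\mathrm{span}\{\ket0,\ket1\}_{A_k}\otimes\mathrm{span}\{\ket{0^{\otimes(N-1)}},\ket{1^{\otimes(N-1)}}\}_{\overline{A_k}}$, where in the ordered basis $\{\ket{0^{\otimes N}},\ket{\varphi_k},\ket{\phi_k},\ket{1^{\otimes N}}\}$ it becomes the effective two-qubit $X$-state
\begin{eqnarray}
\tilde\sigma_k^{\mathrm{eff}}=\frac{1}{2N}\begin{pmatrix}x&0&0&x\\0&1-x&0&0\\0&0&1-x&0\\x&0&0&x\end{pmatrix}. \notag
\end{eqnarray}
I would then certify separability for $x\le\tfrac12$ directly: with $\ket{u_m}=\tfrac1{\sqrt2}(\ket0+i^m\ket1)_{A_k}$ and $\ket{w_m}=\tfrac1{\sqrt2}(\ket{0^{\otimes(N-1)}}+i^{-m}\ket{1^{\otimes(N-1)}})_{\overline{A_k}}$, the manifestly product-separable operator $S=\tfrac14\sum_{m=0}^3\ketbra{u_m}{u_m}\otimes\ketbra{w_m}{w_m}$ has exactly the diagonal and the $\ketbra{0^{\otimes N}}{1^{\otimes N}}$ coherences of the GHZ block, so that $\tilde\sigma_k=\tfrac{2x}{N}\,S+\tfrac{1-2x}{2N}\big(P_k+\overline{P}_k\big)$. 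Both summands are separable across $A_k\mid\overline{A_k}$ and positive semidefinite precisely when $x\le\tfrac12$, establishing biseparability on $[0,\tfrac12]$. (Equivalently one may note the effective state is PPT iff $1-2x\ge0$ and invoke PPT $=$ separability in $\bbC^2\otimes\bbC^2$.)

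The main obstacle is the bookkeeping in the second step: one must distribute the single GHZ coherence $\ketbra{0^{\otimes N}}{1^{\otimes N}}$ over the $N$ cuts so that each piece lands in its own $2\times2$ sector, pairs up with the corresponding noise projectors $P_k,\overline{P}_k$, and turns separable exactly at the threshold $x=\tfrac12$; getting the weights so that $\sum_k\tilde\sigma_k$ reproduces $\r_N(x)$ while each $\tilde\sigma_k$ stays positive and separable is the only delicate point, and the explicit $S$ above makes the $x\le\tfrac12$ boundary transparent.
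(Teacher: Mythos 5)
Your proposal is correct and follows essentially the same route as the paper: the same splitting of $\r_N(x)$ into $N$ pieces $\tilde\sigma_k=\frac{x}{N}\ketbra{\Psi_G}{\Psi_G}+\frac{1-x}{2N}(P_k+\overline{P}_k)$, the same reduction of each piece to an effective two-qubit state across the cut $A_k\mid\overline{A}_k$, and the same threshold $x\le\tfrac12$; the paper certifies separability of each piece exactly by your parenthetical remark (PPT of the effective state holds iff $1-2x\ge0$, plus PPT $=$ separability in $\bbC^2\otimes\bbC^2$). Your explicit phase-averaged separable decomposition via $S=\tfrac14\sum_{m}\ketbra{u_m}{u_m}\otimes\ketbra{w_m}{w_m}$ is a correct, self-contained replacement for that last invocation, but it does not change the structure of the argument.
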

\begin{proof}
	Let the state 
	\begin{eqnarray}
		\label{eq:spectrald}
		\r_{A_1\cdots A_N}(x):=\r_N(x)=x\ketbra{\Psi_G}{\Psi_G}+\frac{1-x}{2N}\sum_{k=1}^N(P_k+\overline{P}_k),
	\end{eqnarray}
	and  $\a_{A_1\cdots A_N}^{(k)}=\frac{x}{N}\ketbra{\Psi_G}{\Psi_G}+\frac{1-x}{2N}(P_k+\overline{P}_k)$, $N \geq 4$. For simplicity, we do not normalize $\a_{A_1\cdots A_N}^{(k)}$. Further, we have $\r_{A_1\cdots A_N}(x)=\sum_{k=1}^N \a_{A_1\cdots A_N}^{(k)}$.
	
	For the state $\a_{A_1\cdots A_N}^{(k)}$, we bond systems $A_1,\cdots,A_{k-1},A_{k+1},\cdots,A_N$ and denote it as $\overline{A}_{k}$. Define a bijection from $\{ \ket{0,\cdots,0}, \ket{1,\cdots,1} \}$ in $(\bbC^2)^{\otimes (N-1)}$ to the basis $ \{\ket{0}, \ket{1}\}$ in $\bbC^2$ as follows: $ \ket{0,\cdots,0} \rightarrow \ket{0}$, $\ket{1,\cdots,1} \rightarrow \ket{1}$. Then we rewrite the state $\a_{A_1\cdots A_N}^{(k)}$ on $\bbC^2 \otimes \bbC^2$ as 
	\begin{eqnarray}
		\a^{(k)}_{\overline{A}_{k}A_{k}}=\frac{x}{2N}(\ket{00}+\ket{11})(\bra{00}+\bra{11})+\frac{1-x}{2N}(\ketbra{01}{01}+\ketbra{10}{10}). \notag
	\end{eqnarray}
	The partial transpose of $\a^{(k)}_{\overline{A}_{k}A_{k}}$ is
	\begin{eqnarray}
		\label{eq:akpartialtranspose}
		(\a^{(k)}_{\overline{A}_{k}A_{k}})^\Gamma=\frac{x}{2N}(\ketbra{00}{00}+\ketbra{11}{11}+\ketbra{01}{10}+\ketbra{10}{01})+\frac{1-x}{2N}(\ketbra{01}{01}+\ketbra{10}{10}).
	\end{eqnarray} 
	It is known that $\a^{(k)}_{\overline{A}_{k}A_{k}}$ is separable if and only if $\a^{(k)}_{\overline{A}_{k}A_{k}}$ is PPT \cite{PhysRevLett.77.1413, HORODECKI20011}. One can verify that the four eigenvalues of $(\a^{(k)}_{\overline{A}_{k}A_{k}})^\Gamma$ in \eqref{eq:akpartialtranspose} are $\frac{1}{2N},\frac{1-2x}{2N}, \frac{x}{2N},\frac{x}{2N}$. So if $\a^{(k)}_{\overline{A}_{k}A_{k}}$ is separable, we have $x \in [0,\frac{1}{2}]$. That is,  $\a_{A_1\cdots A_N}^{(k)}$ is biseparable if $x \in [0,\frac{1}{2}]$.
	Since $\r_N(x)$ is the convex sum of $\a_{A_1\cdots A_N}^{(k)}$, we obtain that $\r_N(x)$ is biseparable if $x \in [0,\frac{1}{2}]$. This completes the proof.
\end{proof}

Thus the D{\"u}r's multipartite state $\r_N(x)$ is not GE if $x \in [0,\frac{1}{2}]$. Further, the value of $x$ is independent from $N$. So the genuine entanglement of $\ket{\Psi_G}$ is not pretty robust against the noise $\frac{1-x}{2N}\sum_{k=1}^N (P_k+\overline{P}_k)$.

\section{quantum information masking under noise}
\label{sec:masking}
For bipartite systems, an operation $\mathcal{S}$ is defined as quantum information masker if it maps states $\{\ket{a_k}_{A_1}\in \mathcal{H}_{A_1}\}$ to states $\{\ket{\psi_k} \in \mathcal{H}_{A_1} \otimes \mathcal{H}_{A_2}\}$ such that all the reductions to one party of $\ket{\psi_k}$ are identical. For multipartite systems, an operation $\mathcal{S}$ is an $m$-uniform quantum information masker if it maps states $\{\ket{a_k}_{A_1}\in \mathcal{H}_{A_1}\}$ to states 
$\{\ket{\psi_k} \in \otimes_{\ell=1}^n\mathcal{H}_{A_\ell}\}$ such that all the reductions to $m$ parties of $\ket{\psi_j}$ are identical. Moreover, if $m=\lfloor\frac{n}{2}\rfloor $, then this $m$-uniform masking is strong quantum information masking \cite{feishimask}. The action of the masker is a physical process. It can be modeled by a unitary operator $U_{\mathcal{S}}$ on the system $A_1$ plus some ancillary systems $\{ A_2,\cdots,A_n\}$, given by $\mathcal{S}:U_{\mathcal{S}}\ket{a_k}_{A_1}\otimes\ket{b}_{A_2\cdots A_n}=\ket{\psi_k}$. As far as we know, quantum information masking under noise is little studied. In this section, we investigate the quantum information masking under the $n$-partite $d$-dimensional Pauli channel $\cU$ in \eqref{eq:channelU} and the locally unitary channel $\Lambda$ in \eqref{eq:channellambda}, respectively. 

\begin{figure}[!h] 
	\center{\includegraphics[width=1\textwidth]  {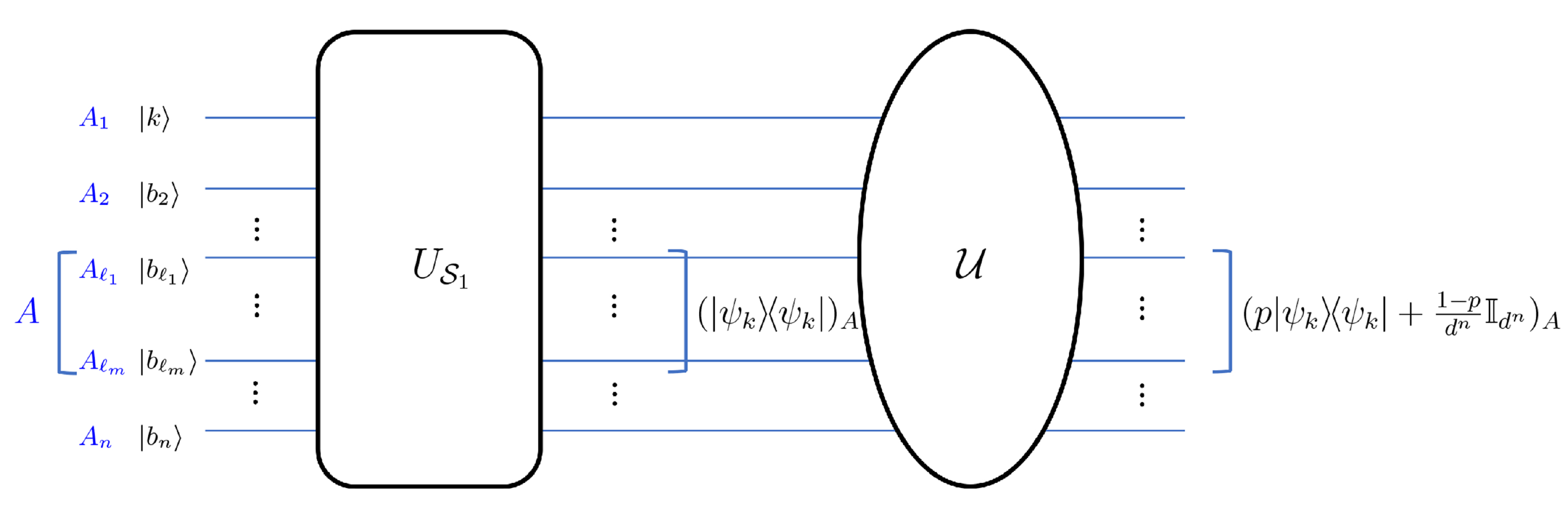}} 
	\caption{\label{fig:channelnoise} Quantum information masking under noise. The state $\ket{k}$ is to be encoded in the $m$-uniform quantum information masking process modelled by $U_{\mathcal{S}_1}$ and the $n$-partite $d$-dimensional Pauli channel $\cU$, where $\ket{b_i}, 2\leq i \leq n$ are the initial states of the ancillary systems. The reduced density matrices $(\ketbra{\psi_k}{\psi_k})_A$ of any $m$ subsystems are the same, where $A=\{A_{\ell_1}, \cdots ,A_{\ell_m}\}$. Then the state $\ketbra{\psi_k}{\psi_k}$ goes through the channel $\cU$, the reduced density matrices $(p\ketbra{\psi_k}{\psi_k}+\frac{1-p}{d^n}\mathbb{I}_{d^n})_A$ of the resulting state $p\ketbra{\psi_k}{\psi_k}+\frac{1-p}{d^n}\mathbb{I}_{d^n}$ are still independent of the encoded state $\ket{k}$. Thus this $m$-uniform masking under noise still works.
	}  
\end{figure}

First, as shown in FIG. \ref{fig:channelnoise}, we consider the quantum information masking and the $n$-partite $d$-dimensional Pauli channel $\cU$. A masker $\mathcal{S}_1: U_{\mathcal{S}_1}\ket{k}_{A_1} \otimes \ket{b}_{A_2 \cdots A_{n}}=\ket{\psi_k}$ masks quantum information contained in $\{ \ket{0},\ket{1},\cdots,\ket{d-1} \} \in \bbC^d$ into $\{ \ket{\psi_0}, \ket{\psi_1}, \cdots, \ket{\psi_{d-1}}  \} \in (\bbC^d)^{\otimes n}$, where $\ket{\psi_k}=\frac{1}{\sqrt{d}}\sum_{j=0}^{d-1}e^{\frac{2\p i}{d}jk}\ket{\underbrace{j,\cdots,j}_n}$, $k=0,1,\cdots,d-1$, and the state $\ket{b}_{A_2 \cdots A_{n}}=\ket{b_2} \otimes \cdots \otimes \ket{b_n}$. One can verify that for all reduced density matrices $(\ketbra{\psi_j}{\psi_j})_{A_{\ell_1}\cdots A_{\ell_m}}$ of $(\ketbra{\psi_j}{\psi_j})_{A_1\cdots A_n}$ for any given $m \in \{1,\cdots,n-1\}$ are $\frac{1}{d}\sum_{j=0}^{d-1}\ketbra{\underbrace{j,\cdots,j}_{m}}{\underbrace{j,\cdots,j}_{m}}$, where $\ell_1,\cdots,\ell_m \in \{1,\cdots,n\}$. They are the same. Thus the messages of $\ket{0}, \cdots, \ket{d-1}$ are strongly masked. Moreover, the masker $\mathcal{S}_1$ is $m$-uniform. Next, the state $\ket{\psi_j}$ goes through the $n$-partite $d$-dimensional Pauli channel $\cU$, we have 
\begin{eqnarray}
	\label{eq:channelpsi}
	&&
	\cU(\ketbra{\psi_k}{\psi_k}) \notag \\
	=&&pK_0\ketbra{\psi_k}{\psi_k}K_0^\dagger+\frac{1-p}{d^n}\sum_{r=1}^{d^n}K_r\ketbra{\psi_k}{\psi_k}K_r^\dagger, \notag \\
	=&&p\ketbra{\psi_k}{\psi_k}+\frac{1-p}{d^n}\mathbb{I}_{d^{n}}.
\end{eqnarray}
The $m$-partite reduced density matrices $(p\ketbra{\psi_k}{\psi_k}+\frac{1-p}{d^n}\mathbb{I}_{d^n})_{A_{\ell_1}\cdots A_{\ell_m}}$ of the state in \eqref{eq:channelpsi} for any given $m \in \{1,\cdots,n-1\}$ are $
\frac{1}{d}\sum_{j=0}^{d-1}\ketbra{\underbrace{j,\cdots,j}_{m}}	{\underbrace{j,\cdots,j}_{m}}+\mathbb{I}_{d^{m}}$. Thus the resulting state is still strongly quantum information masked. Hence, the state in $\{ \ket{0},\ket{1},\cdots,\ket{d-1} \} \in \bbC^d$ can be strongly masked by the masker $\mathcal{S}$ and the $n$-partite $d$-dimensional Pauli channel $\cU$.

Next, for D{\"u}r's multipartite state $\r_N(x)$, we study the quantum information masking and the locally unitary channel $\Lambda$ in \eqref{eq:channellambda}. A masker $\mathcal{S}_2: U_{\mathcal{S}_2}\ket{i}_{A_1} \otimes \ket{b}_{A_2,\cdots,A_{N}}=\ket{\Psi_{G_i}}$ masks quantum information contained in  $\{ \ket{0},\ket{1}\} \in \bbC^2$ into $\{ \ket{\Psi_{G_0}}, \ket{\Psi_{G_1}}\} \in (\bbC^2)^{\otimes N}$, where $\ket{\Psi_{G_i}}=\frac{1}{\sqrt{2}}\sum_{j=0}^{1}(-1)^{ij}\ket{j,\cdots,j}$, $i=0,1$. The reduced density matrices $(\ketbra{\Psi_{G_i}}{\Psi_{G_i}})_{A_{\ell_1}\cdots A_{\ell_m}}$ of $(\ketbra{\Psi_{G_i}}{\Psi_{G_i}})_{A_1\cdots A_n}$ for any given $m \in \{1,\cdots,n-1\}$ are identical, that is, $\frac{1}{2}\sum_{j=0}^{1}\ketbra{\underbrace{j,\cdots,j}_{m}}{\underbrace{j,\cdots,j}_{m}}$, where $\ell_1,\cdots,\ell_m \in \{1,\cdots,n\}$. Hence we have no information about the value of $i$ and it is a strong quantum information masking. Let $\ket{\Psi_{G_i}}$ go through the locally unitary channel $\Lambda$ in \eqref{eq:channellambda}. The state $\ket{\Psi_{G_i}}$ will become 
\begin{eqnarray}
	\label{eq:channelPsi}
	&&
	\Lambda (\ketbra{\Psi_{G_i}}{\Psi_{G_i}}) \notag \\
	=&&x\ketbra{\Psi_{G_i}}{\Psi_{G_i}}+\frac{1-x}{2N}\sum_{k=1}^N(P_k+\overline{P}_k).
\end{eqnarray}
The $(N-1)$-partite reduced density matrices of the resulting state in \eqref{eq:channelPsi} are
\begin{eqnarray}
	&&
	\frac{(N-1)x+1}{2N}\sum_{j=0}^1\ketbra{\underbrace{j,\cdots,j}_{N-1}}{\underbrace{j,\cdots,j}_{N-1}}+\frac{1-x}{2N}\sum_{k=1}^{N-1}(P^\star_k+\overline{P^\star}_k), \notag
\end{eqnarray}
where $P^\star_k$ is the projector onto the product state $\ket{\phi_k}=\ket{0}_{A_1}\ket{0}_{A_2}\cdots\ket{1}_{A_k}\cdots \ket{0}_{A_{N-1}}$, and $\overline{P^\star}_k$ is the projector onto the product state $\ket{\varphi_k}=\ket{1}_{A_1}\ket{1}_{A_2}\cdots\ket{0}_{A_k}\cdots \ket{1}_{A_{N-1}}$. So the resulting state is $1$-uniformly masked.   

The above two examples show that quantum information masking under noise (the channels we constructed) still works. The traditional definition of quantum information masking asks for a map from a pure state to another pure state. Our channels indirectly maps a pure state to a mixed state and the masking effect still works. It is known that noise is not avoidable in nature. Thus quantum information masking under noise plays a key role in quantum secret sharing in experiment. 

\section{DSD of tripartite GE States}
\label{sec:DSD333}

A bipartite system under dephasing will evolve into the dynamics that DSD must precede ESD \cite{DSD_MA}. In this section, we consider the three-qutrit GE state distillable across every bipartition \cite{HalderGES}. We investigate the evolution of it under global (i.e., local and collective) dephasing in \eqref{eq:operator}. It turns out that the state will undergo DSD and become PPT. We study the existence of PPT entanglement using realignment criterion and indecomposable positive map.

Assume that $\ket{\eta_i}=\ket{0}+(-1)^i\ket{1}$, $\ket{\xi_j}=\ket{1}+(-1)^j\ket{2}$, where $i,j=0,1$. Let
\begin{equation*}
\begin{aligned}
\cB=\{&\ket{\psi(i,j)}_1=\ket{0}\ket{\eta_i}\ket{\xi_j}, \; \ket{\psi(i,j)}_2=\ket{\eta_i}\ket{2}\ket{\xi_j}, \; \ket{\psi(i,j)}_3=\ket{2}\ket{\xi_j}\ket{\eta_i},\\
&\ket{\psi(i,j)}_4=\ket{\eta_i}\ket{\xi_j}\ket{0}, \; \ket{\psi(i,j)}_5=\ket{\xi_j}\ket{0}\ket{\eta_i}, \; \ket{\psi(i,j)}_6=\ket{\xi_j}\ket{\eta_i}\ket{2}, \\
&\ket{\psi(0,0)}_1-\ket{\psi(0,0)}_2, \; \ket{\psi(0,0)}_3-\ket{\psi(0,0)}_4, \; 
\ket{\psi(0,0)}_5-\ket{\psi(0,0)}_6,\\
&(\ket{0}+\ket{1}+\ket{2})(\ket{0}+\ket{1}+\ket{2})(\ket{0}+\ket{1}+\ket{2}), \;
i,j=0,1
\}.
\end{aligned}
\end{equation*}
We review the unextendible biseparable base \cite{DSD_MA},  $\cB_1=\cB\setminus \{\bigcup_{\ell=1}^6\ket{\psi(0,0)}_\ell\}$ and 
\begin{eqnarray}
\label{eq:r3}
\rho(0):=\frac{1}{5}\left(\mathbb{I}_3\otimes \mathbb{I}_3\otimes \mathbb{I}_3-\sum_{\ket{\psi}\in{\cB_1}}\ketbra{\widetilde{\psi}}{\widetilde{\psi}}\right),
\end{eqnarray}
where $\ket{\widetilde{\psi}}$ is a normalized state of $\ket{\psi}$. Then $\rho(0)$ is a three-qutrit rank-five GE state, and it is distillable across every bipartite cut \cite{DSD_MA}. One can prove that the three bipartite reduced density matrices $\r_\b(0):=\text{Tr}_\a[\r(0)]$ with $\b \in \{BC,CA,AB\}$ and $\a \in \{A,B,C\}$, respectively, are identical. Hence it suffices to check whether the state $\r(0)$ will undergo ESD and DSD across the cut $\r_{A|BC}(0)$. 

\begin{figure}[!h] 
	\center{\includegraphics[width=10cm]  {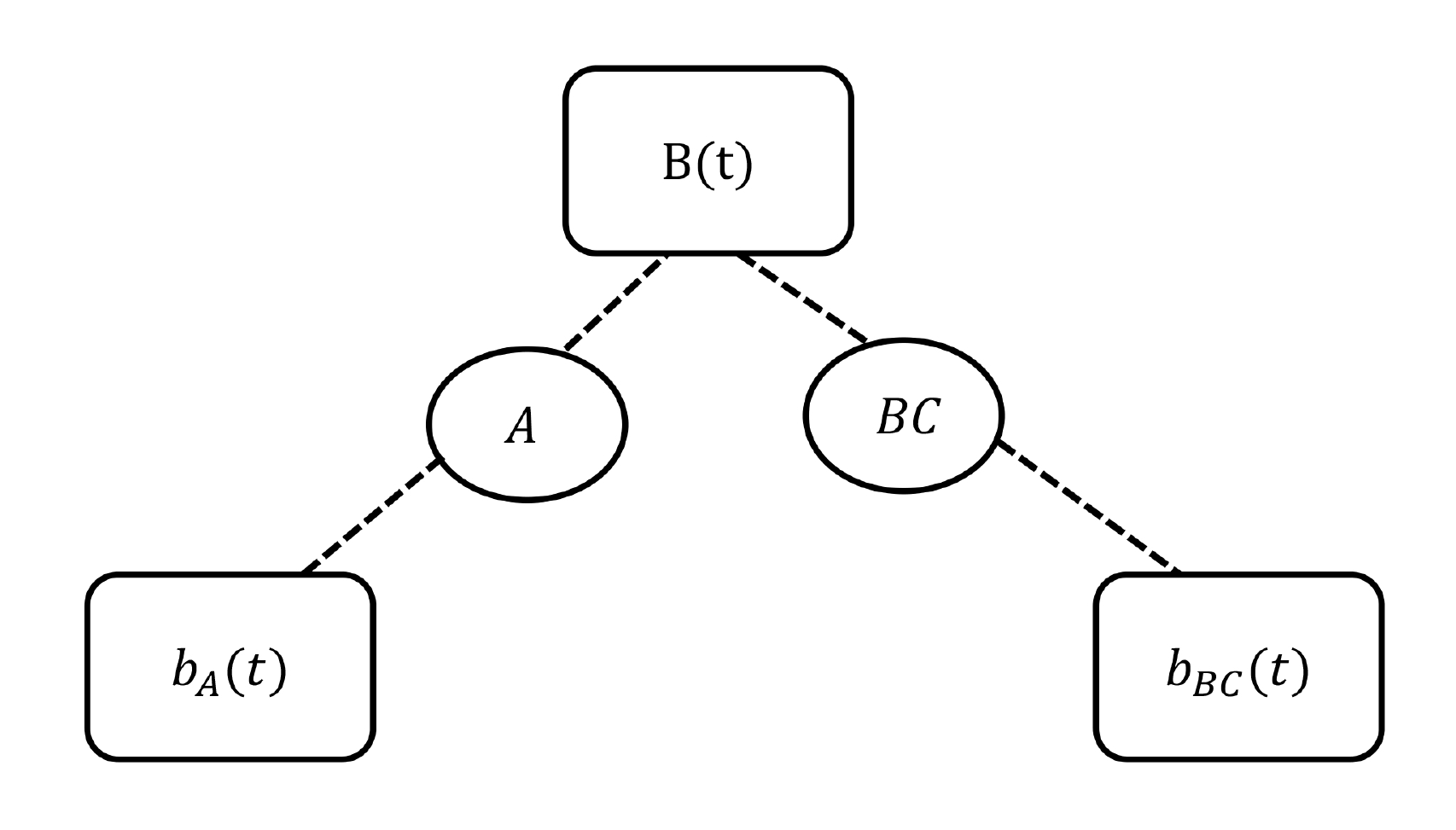}} 
	\caption{\label{fig:dephase} Two systems $A$ and $BC$ are collectively interacting with the stochastic magnetic field $B(t)$ and separately interacting with the stochastic magnetic field $b_A(t)$ and $b_{BC}(t)$.
	}  
\end{figure}

As shown in FIG. \ref{fig:dephase}, the initial state $\r(0)$ interacting with global dephasing only correlates with the three stochastic magnetic fields $b_A(t), b_{BC}(t)$ and $B(t)$ \cite{PhysRevA.76.044101, DSD_SW}. The time-dependent density matrix for the system is $\r(t)=\varepsilon(\r(0))=\sum_\mu K_\mu^\dagger(t)\r(0)K_\mu(t)$, where the Kraus operators $K_\mu(t)$ of the channel $\varepsilon$ representing the influence of statistical noise are completely positive and trace preserving, that is, $\sum_\mu K_\mu^\dagger K_\mu=\mathbb{I}$. By bonding systems $B$ and $C$, We suppose that systems $A$, $BC$ only correlate with the local magnetic fields in FIG. \ref{fig:dephase}. That is, $\r(0)$ interacts with the  local and collective dephasing noise. It can be expressed as 
\begin{eqnarray}
	\label{eq:rl(t)_3}
	&&
	\r_g(t)=\sum_{i=1}^{3}\sum_{j=1}^9 (E_i^{A})^\dagger (F_j^{B})^\dagger \r(0) E_i^AF_j^B,
\end{eqnarray}
where $E_i^A$ and  $F_j^B$ describe the interaction with the local magnetic fields $b_A(t)$ and $b_{BC}(t)$ respectively. The operators $E_i^A$ and  $F_j^B$ are
\begin{eqnarray}
	\label{eq:operator}
	&&
	E_1(t)=\diag(1,\g_A(t),\g_A(t))\otimes \mathbb{I}_9, \notag \\
	&&
	E_2(t)=\diag(0,\omega_A(t),0)\otimes \mathbb{I}_9, \notag \\
	&&
	E_3(t)=\diag(0,0,\omega_A(t))\otimes \mathbb{I}_9, \notag \\
	&&
	F_1(t)=\mathbb{I}_3\otimes\diag(1,\g_B(t),\g_B(t),\dots,\g_B(t)),\notag \\
	&&
	F_2(t)=\mathbb{I}_3\otimes\diag(0,\omega_B(t),0,\dots,0),\notag \\
	&&
	\dots \notag \\
	&&
	F_9(t)=\mathbb{I}_3 \otimes \diag(0,0,\dots,0,\omega_B(t)).
\end{eqnarray}
The time-dependent parameters are $\g_A(t)=\g_B(t)=e^{-\varGamma_{1}t/2}$, $\omega_A(t)=\sqrt{1-\g_A^2(t)}$, $\omega_B(t)=\sqrt{1-\g_B^2(t)}$, and $\varGamma_{1}$ is the dephasing rate.

\begin{figure}[!h] 
	\center{\includegraphics[width=1\textwidth]  {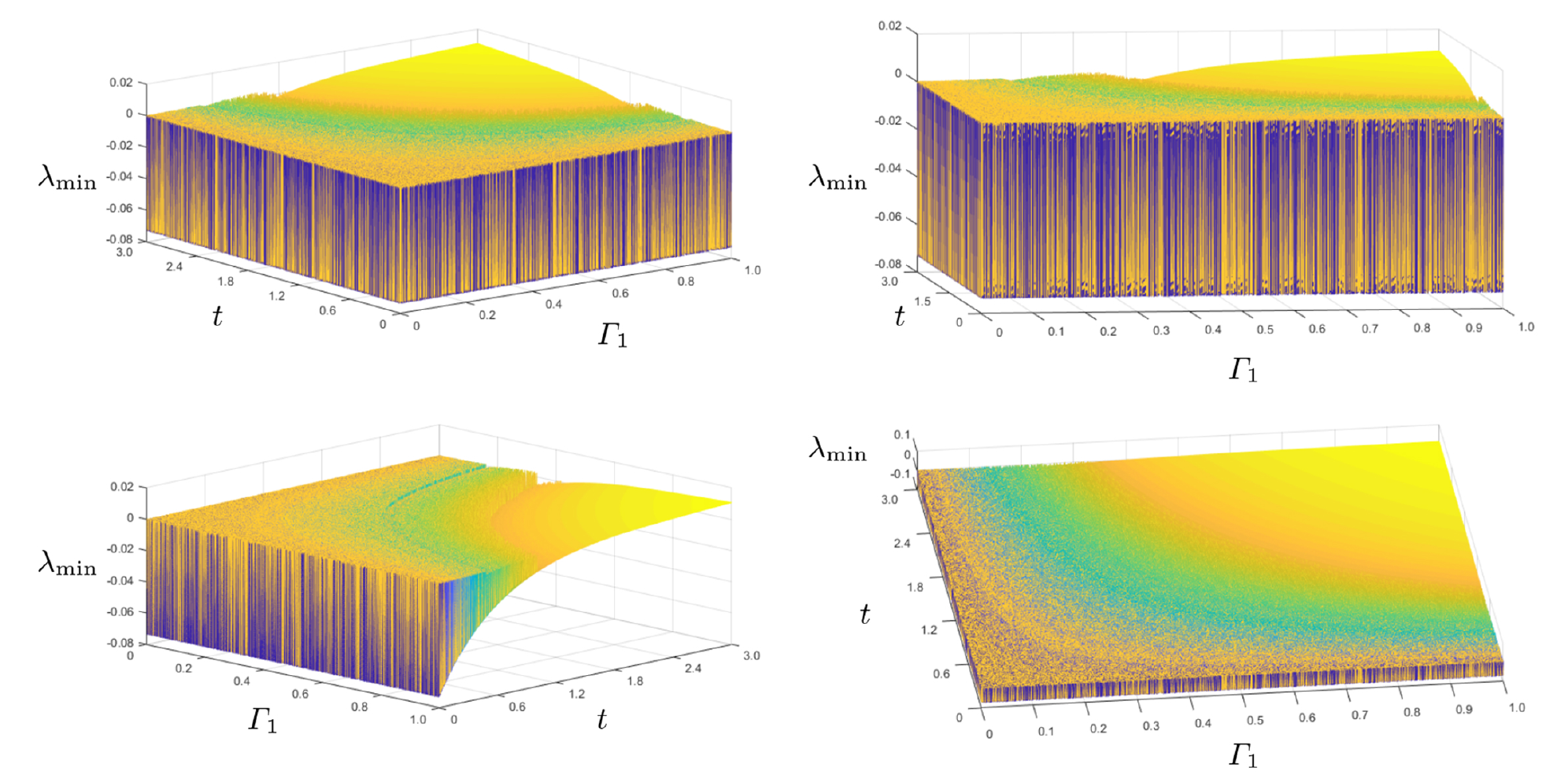}} 
	\caption{\label{fig:eigenvt-4} The three-dimensional plots of the minimum eigenvalue $\lambda_{\min}$ of $\r^\Gamma_g(t)$ in different viewing angles with the local asymptotic dephasing rate $\varGamma_1 \in [0,1]$ and the time $t \in [0,3]$. One can see that $\r^\Gamma_g(t)$ is an NPT state for any $t \in [0,3]$ and $ \varGamma_1 \in [0,0.459)$, or any $t \in [0, 1.377)$ and $ \varGamma_1 \in [0,1]$.
	}  
\end{figure}

\begin{figure}[!h] 
	\center{\includegraphics[width=1\textwidth]  {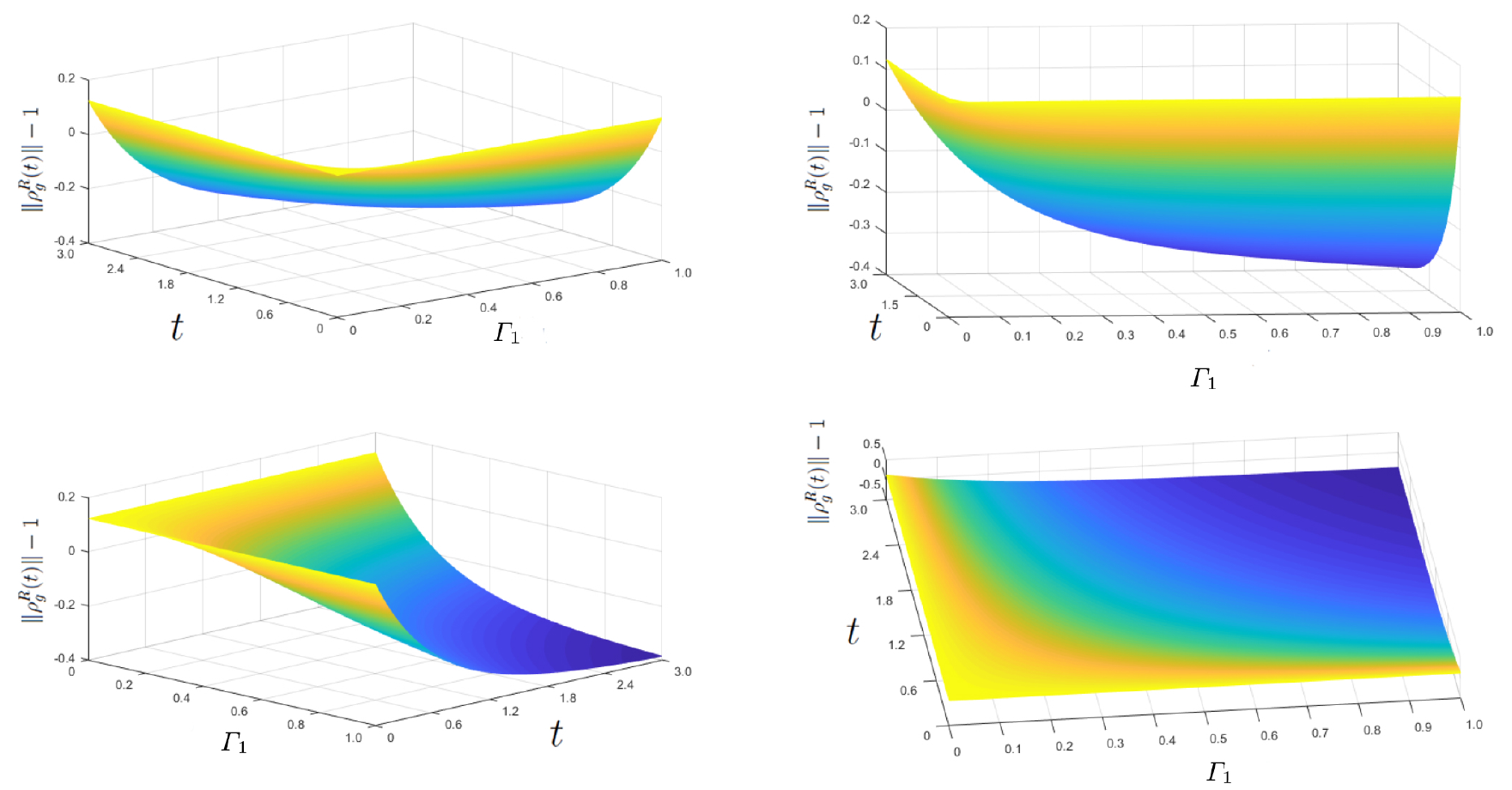}} 
	\caption{\label{fig:realigvt-4} The three-dimensional plots of $\|\r^R_g(t)\|-1$ versus $t \in [0,3]$ and $\varGamma_1 \in [0,1]$ in different viewing angles. We can obtain that $\|\r^R_g(t)\|-1$ is  positive for any $t \in [0,0.186]$ and $\varGamma_1 \in [0,1]$, or any $t \in [0,3]$ and $\varGamma_1 \in [0,0.062]$.
	}  
\end{figure}

Let $\r^\Gamma_g(t)$ be the partial transpose of $\r_g(t)$ in Eq. \eqref{eq:rl(t)_3}. Consider the local asymptotic dephasing rate $\varGamma_1 \in [0,1]$ and the time $t \in [0,3]$. We compute the minimum eigenvalue $\lambda_{\min}(\r^\Gamma_g(t))$ of $\r^\Gamma_g(t)$. Then we find that $\r_g(t)$ is NPT for any $t \in [0,3]$ and $ \varGamma_1 \in [0,0.459)$, or any $t \in [0, 1.377)$ and $ \varGamma_1 \in [0,1]$. So under these circumstances, the state $\r_g(t)$ is entangled. For any $t \in (1.377,3]$ and some $ \varGamma_1 \in (0.459,1]$, the state $\r_g(t)$ is PPT. Thus the state $\r(0)$ will undergo multipartite DSD. In FIG. \ref{fig:eigenvt-4}, we draw the three-dimensional plots of the minimum eigenvalue $\lambda_{\min}(\r^\Gamma_g(t))$ in different viewing angles. Next we use the realignment criterion in Lemma \ref{le:realignment} to compute $\|\r^R_g(t)\|-1$ with $\varGamma_1 \in [0,1]$ and the time $t \in [0,3]$. Similarly, we draw the three-dimensional plots of $\|\r^R_g(t)\|-1$ in different viewing angles in FIG. \ref{fig:realigvt-4}. For any $t \in [0,0.186]$ and $\varGamma_1 \in [0,1]$, or any $t \in [0,3]$ and $\varGamma_1 \in [0,0.062]$, the value of $\|\r^R_g(t)\|-1$ is positive. If the minimum eigenvalue of $\r^\Gamma_g(t)$ and the value of $\|\r^R_g(t)\|-1$ are positive at the same time, then Lemma \ref{le:realignment} implies that $\r(0)$ is PPT entangled. 

In the rest of this section, we choose the local asymptotic dephasing rate $\varGamma_1=1$. Then we can obtain that $\r_g(t)$ will become a PPT state after $t\approx 1.38$. By computing $\|\r^R_g(t)\|-1$, we obtain that if  $t>0.18$, then the value of $\|\r^R_g(t)\|-1$ is positive. So $\r_g(t)$ with $\varGamma_1=1$ can not be distillable under LOCC when $t>1.38$. Next, we use the indecomposable positive map proposed in \cite{Bhattacharya2020GeneratingAD} to detect whether $\r_g(t)$ is PPT entangled. It can detect entanglement in a certain class of two-qutrit PPT entangled states. The one parameter class of linear trace preserving maps $\Lambda_{\a}$ is 
\begin{eqnarray}
\label{eq:lambda}
\Lambda_{\a}=1/(\a+\frac{1}{\a})
\bma
\a(x_{11}+x_{22}) & -x_{12} & -\a x_{13} \\
-x_{21} & \frac{1}{\a}(x_{22}+x_{33}) & -x_{32} \\
-\a x_{31} & -x_{23} & \a x_{33}+\frac{1}{\a}x_{11}
\ema,
\end{eqnarray}
where $X=\bma
x_{11} & x_{12} & x_{13} \\
x_{21} & x_{22} & x_{23} \\
x_{31} & x_{32} & x_{33}
\ema$ is any $3 \times 3$ matrix and $\a \in (0,1]$. Thus, if $\r_g(t)$ in \eqref{eq:rl(t)_3} is PPT and the minimum eigenvalues of $(\mathbb{I}_9\otimes \Lambda_{\a})\r_g(t)$ is smaller than zero, then we can say that $\r_g(t)$ is PPT entangled for some $t$ and $\varGamma_1$. By choosing the local asymptotic dephasing rate $\varGamma_1=1$, we compute the minimum eigenvalues $\lambda_{\min}$ of $(\mathbb{I}_9\otimes \Lambda_{\a})\r_g(t)$ versus time $t \in [0,3]$ and $\a \in (0,1]$.  In FIG. \ref{fig:mapt3}, we plot the three-dimensional plots of minimum eigenvalues $\lambda_{\min}$ of $(\mathbb{I}_9\otimes \Lambda_{\a})\r_g(t)$ in different viewing angles versus time  $t \in [0,3]$ and $\a \in (0,1]$. We can see that for $\a \in (0,1]$, only when $t \in [0,0.666)$, the minimum eigenvalue $\lambda_{\min}$ of $(\mathbb{I}_9\otimes \Lambda_{\a})\r_g(t)$ is negative. However, $t=0.666<1.38$. The numerical results do not show the PPT entanglement of $\r_g(t)$ when $\varGamma_1=1$. Hence, $\r_g(t)$ with the dephasing rate $\varGamma_{1}=1$ will become PPT, thus is not distillable under LOCC. That is, the state $\r(0)$ with local and collective noise in \eqref{eq:operator} when $\varGamma_1=1$ will undergo multipartite DSD. 

\begin{figure}[!h] 
	\center{\includegraphics[width=1\textwidth]  {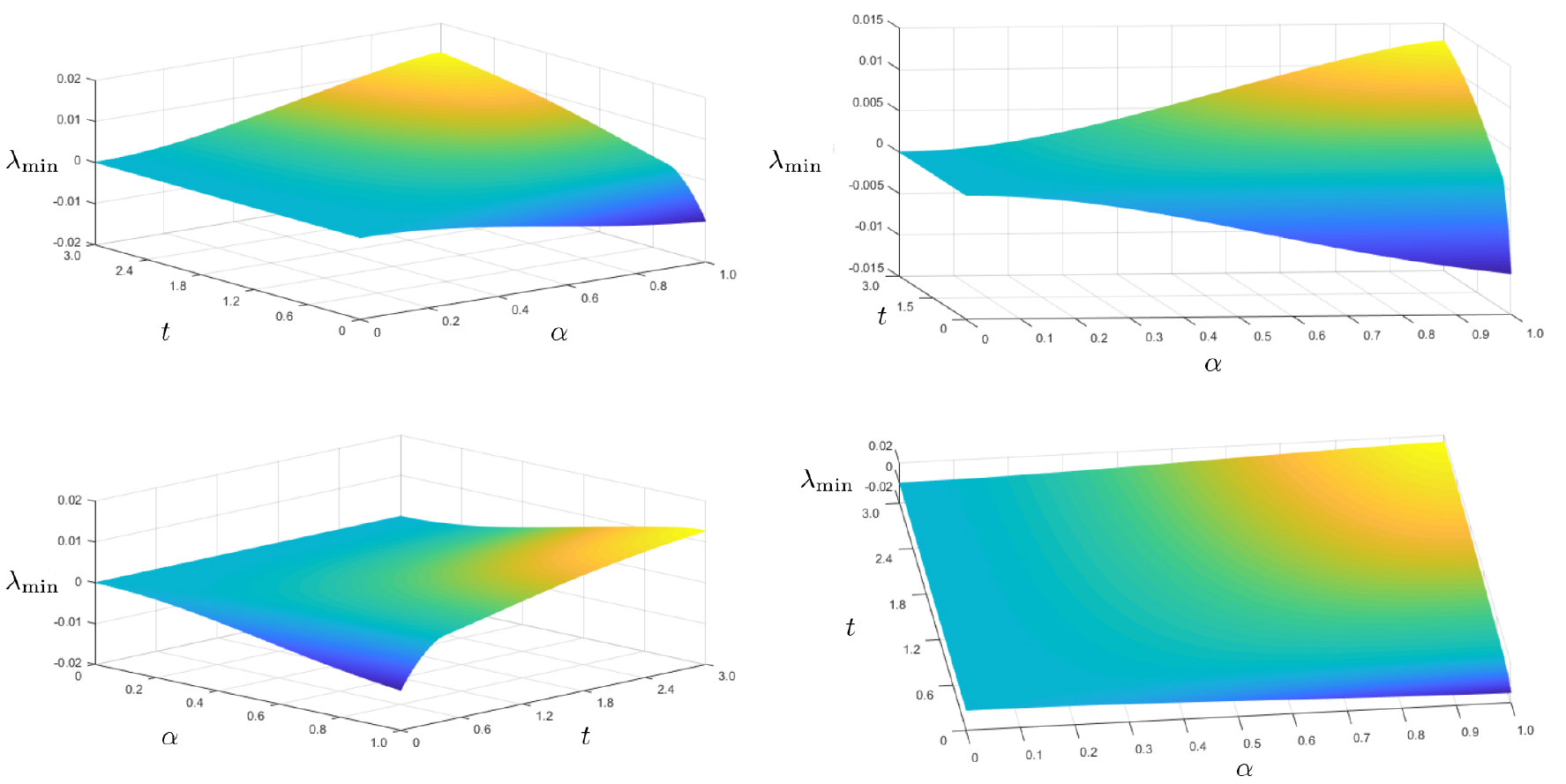}} 
	\caption{\label{fig:mapt3}	The three-dimensional plots of minimum eigenvalues $\lambda_{\min}$ of $(\mathbb{I}_9\otimes \Lambda_{\a})\r_g(t)$ in different viewing angles versus time $t$ and $\a$ for the local asymptotic dephasing rate $\varGamma_1=1$. For $\a \in (0,1]$, the minimum eigenvalue $\lambda_{\min}$ of $(\mathbb{I}_9\otimes \Lambda_{\a})\r_g(t)$ is negative when $t \in (0,0.666)$.		
	}  
\end{figure}

\section{Conclusions}
\label{sec:con}
We constructed the GESD-free states and multipartite DSD-free states by establishing the multipartite high dimensional Pauli channel. We presented the locally unitary channel such that the $N$-partite GHZ state becomes the D{\"u}r's multipartite state. We also studied the quantum information masking under the channel we constructed. Further, we investigated the three-qutrit GE state distillable across every bipartition under global noise. The numerical results showed that it may undergo DSD and become PPT. Our future work is to give more constructions of GESD-free and multipartite DSD-free states from other states such as Werner states as well as their relation to quantum information masking under more general noise.

\section*{Acknowledgments}
\label{sec:ack}	
Authors were supported by the  NNSF of China (Grant No. 11871089), and the Fundamental Research Funds for the Central Universities (Grant Nos. KG12080401 and ZG216S1902).

\bibliographystyle{IEEEtran}

\bibliography{mengyao=sddistillability}

\begin{thebibliography}{10}
\providecommand{\url}[1]{#1}
\csname url@samestyle\endcsname
\providecommand{\newblock}{\relax}
\providecommand{\bibinfo}[2]{#2}
\providecommand{\BIBentrySTDinterwordspacing}{\spaceskip=0pt\relax}
\providecommand{\BIBentryALTinterwordstretchfactor}{4}
\providecommand{\BIBentryALTinterwordspacing}{\spaceskip=\fontdimen2\font plus
\BIBentryALTinterwordstretchfactor\fontdimen3\font minus
  \fontdimen4\font\relax}
\providecommand{\BIBforeignlanguage}[2]{{%
\expandafter\ifx\csname l@#1\endcsname\relax
\typeout{** WARNING: IEEEtran.bst: No hyphenation pattern has been}%
\typeout{** loaded for the language `#1'. Using the pattern for}%
\typeout{** the default language instead.}%
\else
\language=\csname l@#1\endcsname
\fi
#2}}
\providecommand{\BIBdecl}{\relax}
\BIBdecl

\bibitem{PhysRevLett.86.5188}
\BIBentryALTinterwordspacing
R.~Raussendorf and H.~J. Briegel, ``A one-way quantum computer,'' \emph{Phys.
  Rev. Lett.}, vol.~86, pp. 5188--5191, May 2001. [Online]. Available:
  \url{https://link.aps.org/doi/10.1103/PhysRevLett.86.5188}
\BIBentrySTDinterwordspacing

\bibitem{PhysRevA.77.032321}
\BIBentryALTinterwordspacing
S.~Muralidharan and P.~K. Panigrahi, ``Perfect teleportation, quantum-state
  sharing, and superdense coding through a genuinely entangled five-qubit
  state,'' \emph{Phys. Rev. A}, vol.~77, p. 032321, Mar 2008. [Online].
  Available: \url{https://link.aps.org/doi/10.1103/PhysRevA.77.032321}
\BIBentrySTDinterwordspacing

\bibitem{PhysRevA.97.062309}
\BIBentryALTinterwordspacing
Y.~Guo, X.-M. Hu, B.-H. Liu, Y.-F. Huang, C.-F. Li, and G.-C. Guo,
  ``Experimental witness of genuine high-dimensional entanglement,''
  \emph{Phys. Rev. A}, vol.~97, p. 062309, Jun 2018. [Online]. Available:
  \url{https://link.aps.org/doi/10.1103/PhysRevA.97.062309}
\BIBentrySTDinterwordspacing

\bibitem{PhysRevLett.94.060501}
\BIBentryALTinterwordspacing
G.~T\'oth and O.~G\"uhne, ``Detecting genuine multipartite entanglement with
  two local measurements,'' \emph{Phys. Rev. Lett.}, vol.~94, p. 060501, Feb
  2005. [Online]. Available:
  \url{https://link.aps.org/doi/10.1103/PhysRevLett.94.060501}
\BIBentrySTDinterwordspacing

\bibitem{PhysRevLett.125.230501}
\BIBentryALTinterwordspacing
X.-M. Hu, C.~Zhang, B.-H. Liu, Y.~Cai, X.-J. Ye, Y.~Guo, W.-B. Xing, C.-X.
  Huang, Y.-F. Huang, C.-F. Li, and G.-C. Guo, ``Experimental high-dimensional
  quantum teleportation,'' \emph{Phys. Rev. Lett.}, vol. 125, p. 230501, Dec
  2020. [Online]. Available:
  \url{https://link.aps.org/doi/10.1103/PhysRevLett.125.230501}
\BIBentrySTDinterwordspacing

\bibitem{PhysRevLett.111.070501}
\BIBentryALTinterwordspacing
H.~S. Dhar, A.~Sen(De), and U.~Sen, ``Characterizing genuine multisite
  entanglement in isotropic spin lattices,'' \emph{Phys. Rev. Lett.}, vol. 111,
  p. 070501, Aug 2013. [Online]. Available:
  \url{https://link.aps.org/doi/10.1103/PhysRevLett.111.070501}
\BIBentrySTDinterwordspacing

\bibitem{2018Experimental}
M.~Erhard, M.~Malik, M.~Krenn, and A.~Zeilinger, ``Experimental
  greenberger–horne–zeilinger entanglement beyond qubits,'' \emph{Nature
  Photonics}, 2018.

\bibitem{PhysRevLett.96.060502}
\BIBentryALTinterwordspacing
Y.~Yeo and W.~K. Chua, ``Teleportation and dense coding with genuine
  multipartite entanglement,'' \emph{Phys. Rev. Lett.}, vol.~96, p. 060502, Feb
  2006. [Online]. Available:
  \url{https://link.aps.org/doi/10.1103/PhysRevLett.96.060502}
\BIBentrySTDinterwordspacing

\bibitem{PhysRevA.75.052306}
\BIBentryALTinterwordspacing
Z.-X. Man, Y.-J. Xia, and N.~B. An, ``Genuine multiqubit entanglement and
  controlled teleportation,'' \emph{Phys. Rev. A}, vol.~75, p. 052306, May
  2007. [Online]. Available:
  \url{https://link.aps.org/doi/10.1103/PhysRevA.75.052306}
\BIBentrySTDinterwordspacing

\bibitem{PhysRevLett.77.1413}
\BIBentryALTinterwordspacing
A.~Peres, ``Separability criterion for density matrices,'' \emph{Phys. Rev.
  Lett.}, vol.~77, pp. 1413--1415, Aug 1996. [Online]. Available:
  \url{https://link.aps.org/doi/10.1103/PhysRevLett.77.1413}
\BIBentrySTDinterwordspacing

\bibitem{PhysRevLett.88.187904}
\BIBentryALTinterwordspacing
A.~C. Doherty, P.~A. Parrilo, and F.~M. Spedalieri, ``Distinguishing separable
  and entangled states,'' \emph{Phys. Rev. Lett.}, vol.~88, p. 187904, Apr
  2002. [Online]. Available:
  \url{https://link.aps.org/doi/10.1103/PhysRevLett.88.187904}
\BIBentrySTDinterwordspacing

\bibitem{PhysRevLett.80.5239}
\BIBentryALTinterwordspacing
M.~Horodecki, P.~Horodecki, and R.~Horodecki, ``Mixed-state entanglement and
  distillation: Is there a ``bound'' entanglement in nature?'' \emph{Phys. Rev.
  Lett.}, vol.~80, pp. 5239--5242, Jun 1998. [Online]. Available:
  \url{https://link.aps.org/doi/10.1103/PhysRevLett.80.5239}
\BIBentrySTDinterwordspacing

\bibitem{RevModPhys.81.865}
\BIBentryALTinterwordspacing
R.~Horodecki, P.~Horodecki, M.~Horodecki, and K.~Horodecki, ``Quantum
  entanglement,'' \emph{Rev. Mod. Phys.}, vol.~81, pp. 865--942, Jun 2009.
  [Online]. Available: \url{https://link.aps.org/doi/10.1103/RevModPhys.81.865}
\BIBentrySTDinterwordspacing

\bibitem{DSD_SW}
\BIBentryALTinterwordspacing
W.~Song, L.~Chen, and S.-L. Zhu, ``Sudden death of distillability in
  qutrit-qutrit systems,'' \emph{Phys. Rev. A}, vol.~80, p. 012331, Jul 2009.
  [Online]. Available:
  \url{https://link.aps.org/doi/10.1103/PhysRevA.80.012331}
\BIBentrySTDinterwordspacing

\bibitem{DSD_MA}
\BIBentryALTinterwordspacing
M.~Ali, ``Distillability sudden death in qutrit-qutrit systems under global and
  multilocal dephasing,'' \emph{Phys. Rev. A}, vol.~81, p. 042303, Apr 2010.
  [Online]. Available:
  \url{https://link.aps.org/doi/10.1103/PhysRevA.81.042303}
\BIBentrySTDinterwordspacing

\bibitem{PhysRevLett.98.030501}
\BIBentryALTinterwordspacing
G.~Smith and J.~A. Smolin, ``Degenerate quantum codes for pauli channels,''
  \emph{Phys. Rev. Lett.}, vol.~98, p. 030501, Jan 2007. [Online]. Available:
  \url{https://link.aps.org/doi/10.1103/PhysRevLett.98.030501}
\BIBentrySTDinterwordspacing

\bibitem{PhysRevLett.107.253602}
\BIBentryALTinterwordspacing
A.~Chiuri, V.~Rosati, G.~Vallone, S.~P\'adua, H.~Imai, S.~Giacomini,
  C.~Macchiavello, and P.~Mataloni, ``Experimental realization of optimal noise
  estimation for a general pauli channel,'' \emph{Phys. Rev. Lett.}, vol. 107,
  p. 253602, Dec 2011. [Online]. Available:
  \url{https://link.aps.org/doi/10.1103/PhysRevLett.107.253602}
\BIBentrySTDinterwordspacing

\bibitem{quantumnoise2020}
F.~S. T. . W. J.~J. Harper, R., ``Efficient learning of quantum noise,''
  \emph{Nature Physics}, vol.~16, pp. 1184--1188.

\bibitem{PhysRevA.94.052325}
\BIBentryALTinterwordspacing
J.~J. Wallman and J.~Emerson, ``Noise tailoring for scalable quantum
  computation via randomized compiling,'' \emph{Phys. Rev. A}, vol.~94, p.
  052325, Nov 2016. [Online]. Available:
  \url{https://link.aps.org/doi/10.1103/PhysRevA.94.052325}
\BIBentrySTDinterwordspacing

\bibitem{RevModPhys.82.1155}
\BIBentryALTinterwordspacing
A.~A. Clerk, M.~H. Devoret, S.~M. Girvin, F.~Marquardt, and R.~J. Schoelkopf,
  ``Introduction to quantum noise, measurement, and amplification,'' \emph{Rev.
  Mod. Phys.}, vol.~82, pp. 1155--1208, Apr 2010. [Online]. Available:
  \url{https://link.aps.org/doi/10.1103/RevModPhys.82.1155}
\BIBentrySTDinterwordspacing

\bibitem{PhysRevA.59.4249}
\BIBentryALTinterwordspacing
J.~I. Cirac, A.~K. Ekert, S.~F. Huelga, and C.~Macchiavello, ``Distributed
  quantum computation over noisy channels,'' \emph{Phys. Rev. A}, vol.~59, pp.
  4249--4254, Jun 1999. [Online]. Available:
  \url{https://link.aps.org/doi/10.1103/PhysRevA.59.4249}
\BIBentrySTDinterwordspacing

\bibitem{naturenoise}
\BIBentryALTinterwordspacing
F.~S. . W.~J. Harper, R., ``Efficient learning of quantum noise.'' \emph{Nat.
  Phys.}, vol.~16, p. 1184–1188, Dec 2020. [Online]. Available:
  \url{https://doi.org/10.1038/s41567-020-0992-8}
\BIBentrySTDinterwordspacing

\bibitem{PhysRevA.78.022110}
\BIBentryALTinterwordspacing
D.~M. Greenberger, M.~Horne, and A.~Zeilinger, ``Bell theorem without
  inequalities for two particles. i. efficient detectors,'' \emph{Phys. Rev.
  A}, vol.~78, p. 022110, Aug 2008. [Online]. Available:
  \url{https://link.aps.org/doi/10.1103/PhysRevA.78.022110}
\BIBentrySTDinterwordspacing

\bibitem{PhysRevA.59.1829}
\BIBentryALTinterwordspacing
M.~Hillery, V.~Bu\ifmmode~\check{z}\else \v{z}\fi{}ek, and A.~Berthiaume,
  ``Quantum secret sharing,'' \emph{Phys. Rev. A}, vol.~59, pp. 1829--1834, Mar
  1999. [Online]. Available:
  \url{https://link.aps.org/doi/10.1103/PhysRevA.59.1829}
\BIBentrySTDinterwordspacing

\bibitem{DAI2004281}
\BIBentryALTinterwordspacing
H.-Y. Dai, P.-X. Chen, and C.-Z. Li, ``Probabilistic teleportation of an
  arbitrary two-particle state by a partially entangled three-particle ghz
  state and w state,'' \emph{Optics Communications}, vol. 231, no.~1, pp. 281
  -- 287, 2004. [Online]. Available:
  \url{http://www.sciencedirect.com/science/article/pii/S0030401803023241}
\BIBentrySTDinterwordspacing

\bibitem{PhysRevLett.120.230501}
\BIBentryALTinterwordspacing
K.~Modi, A.~K. Pati, A.~Sen(De), and U.~Sen, ``Masking quantum information is
  impossible,'' \emph{Phys. Rev. Lett.}, vol. 120, p. 230501, Jun 2018.
  [Online]. Available:
  \url{https://link.aps.org/doi/10.1103/PhysRevLett.120.230501}
\BIBentrySTDinterwordspacing

\bibitem{PhysRevLett.83.648}
\BIBentryALTinterwordspacing
R.~Cleve, D.~Gottesman, and H.-K. Lo, ``How to share a quantum secret,''
  \emph{Phys. Rev. Lett.}, vol.~83, pp. 648--651, Jul 1999. [Online].
  Available: \url{https://link.aps.org/doi/10.1103/PhysRevLett.83.648}
\BIBentrySTDinterwordspacing

\bibitem{PhysRevA.69.052307}
\BIBentryALTinterwordspacing
L.~Xiao, G.~Lu~Long, F.-G. Deng, and J.-W. Pan, ``Efficient multiparty
  quantum-secret-sharing schemes,'' \emph{Phys. Rev. A}, vol.~69, p. 052307,
  May 2004. [Online]. Available:
  \url{https://link.aps.org/doi/10.1103/PhysRevA.69.052307}
\BIBentrySTDinterwordspacing

\bibitem{PhysRevA.98.062306}
\BIBentryALTinterwordspacing
M.-S. Li and Y.-L. Wang, ``Masking quantum information in multipartite
  scenario,'' \emph{Phys. Rev. A}, vol.~98, p. 062306, Dec 2018. [Online].
  Available: \url{https://link.aps.org/doi/10.1103/PhysRevA.98.062306}
\BIBentrySTDinterwordspacing

\bibitem{feishimask}
L.~M. S. C. L. . Z.~X. Shi, F., ``$ k $-uniform states and quantum information
  masking,'' \emph{arXiv preprint arXiv:2009.12497.}, 2020.

\bibitem{PhysRevLett.104.210501}
\BIBentryALTinterwordspacing
M.~Huber, F.~Mintert, A.~Gabriel, and B.~C. Hiesmayr, ``Detection of
  high-dimensional genuine multipartite entanglement of mixed states,''
  \emph{Phys. Rev. Lett.}, vol. 104, p. 210501, May 2010. [Online]. Available:
  \url{https://link.aps.org/doi/10.1103/PhysRevLett.104.210501}
\BIBentrySTDinterwordspacing

\bibitem{PhysRevLett.87.230402}
\BIBentryALTinterwordspacing
W.~D\"ur, ``Multipartite bound entangled states that violate bell's
  inequality,'' \emph{Phys. Rev. Lett.}, vol.~87, p. 230402, Nov 2001.
  [Online]. Available:
  \url{https://link.aps.org/doi/10.1103/PhysRevLett.87.230402}
\BIBentrySTDinterwordspacing

\bibitem{Zhu_2010}
\BIBentryALTinterwordspacing
H.~Zhu, L.~Chen, and M.~Hayashi, ``Additivity and non-additivity of
  multipartite entanglement measures,'' \emph{New Journal of Physics}, vol.~12,
  no.~8, p. 083002, aug 2010. [Online]. Available:
  \url{https://doi.org/10.1088/1367-2630/12/8/083002}
\BIBentrySTDinterwordspacing

\bibitem{PhysRevA.70.022322}
\BIBentryALTinterwordspacing
T.-C. Wei, J.~B. Altepeter, P.~M. Goldbart, and W.~J. Munro, ``Measures of
  entanglement in multipartite bound entangled states,'' \emph{Phys. Rev. A},
  vol.~70, p. 022322, Aug 2004. [Online]. Available:
  \url{https://link.aps.org/doi/10.1103/PhysRevA.70.022322}
\BIBentrySTDinterwordspacing

\bibitem{K2005Distributing}
K, Resch, M, Lindenthal, B, Blauensteiner, H, Bohm, A, and F.~and,
  ``Distributing entanglement and single photons through an intra-city,
  free-space quantum channel.'' \emph{Optics express}, vol.~13, no.~1, pp.
  202--209, 2005.

\bibitem{girard2020mixed}
M.~Girard, D.~Leung, J.~Levick, C.-K. Li, V.~Paulsen, Y.~T. Poon, and
  J.~Watrous, ``On the mixed-unitary rank of quantum channels,'' \emph{arXiv
  preprint arXiv:2003.14405}, 2020.

\bibitem{PhysRevLett.102.250404}
\BIBentryALTinterwordspacing
S.~Ghose, N.~Sinclair, S.~Debnath, P.~Rungta, and R.~Stock, ``Tripartite
  entanglement versus tripartite nonlocality in three-qubit
  greenberger-horne-zeilinger-class states,'' \emph{Phys. Rev. Lett.}, vol.
  102, p. 250404, Jun 2009. [Online]. Available:
  \url{https://link.aps.org/doi/10.1103/PhysRevLett.102.250404}
\BIBentrySTDinterwordspacing

\bibitem{horodecki2001distillation}
P.~Horodecki and R.~Horodecki, ``Distillation and bound entanglement,''
  \emph{Quantum Information and Computation}, vol.~1, no.~1, pp. 45--75, 2001.

\bibitem{halder2019construction}
S.~Halder and R.~Sengupta, ``Construction of noisy bound entangled states and
  the range criterion,'' \emph{Physics Letters A}, vol. 383, no.~17, pp.
  2004--2010, 2019.

\bibitem{PhysRevA.61.062313}
\BIBentryALTinterwordspacing
W.~D\"ur, J.~I. Cirac, M.~Lewenstein, and D.~Bru\ss{}, ``Distillability and
  partial transposition in bipartite systems,'' \emph{Phys. Rev. A}, vol.~61,
  p. 062313, May 2000. [Online]. Available:
  \url{https://link.aps.org/doi/10.1103/PhysRevA.61.062313}
\BIBentrySTDinterwordspacing

\bibitem{PhysRevA.61.062312}
\BIBentryALTinterwordspacing
D.~P. DiVincenzo, P.~W. Shor, J.~A. Smolin, B.~M. Terhal, and A.~V. Thapliyal,
  ``Evidence for bound entangled states with negative partial transpose,''
  \emph{Phys. Rev. A}, vol.~61, p. 062312, May 2000. [Online]. Available:
  \url{https://link.aps.org/doi/10.1103/PhysRevA.61.062312}
\BIBentrySTDinterwordspacing

\bibitem{Chen2002A}
K.~Chen and L.~A. Wu, ``A matrix realignment method for recognizing
  entanglement,'' \emph{Quantum Information \& Computation}, vol.~3, no.~3, pp.
  193--202, 2002.

\bibitem{CHEN200214}
\BIBentryALTinterwordspacing
K.~Chen and L.-A. Wu, ``The generalized partial transposition criterion for
  separability of multipartite quantum states,'' \emph{Physics Letters A}, vol.
  306, no.~1, pp. 14 -- 20, 2002. [Online]. Available:
  \url{http://www.sciencedirect.com/science/article/pii/S0375960102015384}
\BIBentrySTDinterwordspacing

\bibitem{PhysRevA.101.062330}
\BIBentryALTinterwordspacing
A.~Sharma and A.~A. Tulapurkar, ``Generation of $n$-qubit $w$ states using spin
  torque,'' \emph{Phys. Rev. A}, vol. 101, p. 062330, Jun 2020. [Online].
  Available: \url{https://link.aps.org/doi/10.1103/PhysRevA.101.062330}
\BIBentrySTDinterwordspacing

\bibitem{PhysRevX.9.031045}
\BIBentryALTinterwordspacing
C.~E. Bradley, J.~Randall, M.~H. Abobeih, R.~C. Berrevoets, M.~J. Degen, M.~A.
  Bakker, M.~Markham, D.~J. Twitchen, and T.~H. Taminiau, ``A ten-qubit
  solid-state spin register with quantum memory up to one minute,'' \emph{Phys.
  Rev. X}, vol.~9, p. 031045, Sep 2019. [Online]. Available:
  \url{https://link.aps.org/doi/10.1103/PhysRevX.9.031045}
\BIBentrySTDinterwordspacing

\bibitem{PhysRevLett.117.210502}
\BIBentryALTinterwordspacing
X.-L. Wang, L.-K. Chen, W.~Li, H.-L. Huang, C.~Liu, C.~Chen, Y.-H. Luo, Z.-E.
  Su, D.~Wu, Z.-D. Li, H.~Lu, Y.~Hu, X.~Jiang, C.-Z. Peng, L.~Li, N.-L. Liu,
  Y.-A. Chen, C.-Y. Lu, and J.-W. Pan, ``Experimental ten-photon
  entanglement,'' \emph{Phys. Rev. Lett.}, vol. 117, p. 210502, Nov 2016.
  [Online]. Available:
  \url{https://link.aps.org/doi/10.1103/PhysRevLett.117.210502}
\BIBentrySTDinterwordspacing

\bibitem{PhysRevLett.95.150504}
\BIBentryALTinterwordspacing
S.~Yu and N.-l. Liu, ``Entanglement detection by local orthogonal
  observables,'' \emph{Phys. Rev. Lett.}, vol.~95, p. 150504, Oct 2005.
  [Online]. Available:
  \url{https://link.aps.org/doi/10.1103/PhysRevLett.95.150504}
\BIBentrySTDinterwordspacing

\bibitem{2017Experimental}
G.~Carvacho, F.~Graffitti, V.~D’Ambrosio, B.~C. Hiesmayr, and F.~Sciarrino,
  ``Experimental investigation on the geometry of ghz states,''
  \emph{Scientific Reports}, vol.~7, no.~1, 2017.

\bibitem{PhysRevLett.78.574}
\BIBentryALTinterwordspacing
M.~Horodecki, P.~Horodecki, and R.~Horodecki, ``Inseparable two spin-
  $\frac{1}{2}$ density matrices can be distilled to a singlet form,''
  \emph{Phys. Rev. Lett.}, vol.~78, pp. 574--577, Jan 1997. [Online].
  Available: \url{https://link.aps.org/doi/10.1103/PhysRevLett.78.574}
\BIBentrySTDinterwordspacing

\bibitem{HORODECKI20011}
\BIBentryALTinterwordspacing
------, ``Separability of n-particle mixed states: necessary and sufficient
  conditions in terms of linear maps,'' \emph{Physics Letters A}, vol. 283,
  no.~1, pp. 1 -- 7, 2001. [Online]. Available:
  \url{http://www.sciencedirect.com/science/article/pii/S0375960101001426}
\BIBentrySTDinterwordspacing

\bibitem{HalderGES}
\BIBentryALTinterwordspacing
S.~Agrawal, S.~Halder, and M.~Banik, ``Genuinely entangled subspace with
  all-encompassing distillable entanglement across every bipartition,''
  \emph{Phys. Rev. A}, vol.~99, p. 032335, Mar 2019. [Online]. Available:
  \url{https://link.aps.org/doi/10.1103/PhysRevA.99.032335}
\BIBentrySTDinterwordspacing

\bibitem{PhysRevA.76.044101}
\BIBentryALTinterwordspacing
K.~Ann and G.~Jaeger, ``Local-dephasing-induced entanglement sudden death in
  two-component finite-dimensional systems,'' \emph{Phys. Rev. A}, vol.~76, p.
  044101, Oct 2007. [Online]. Available:
  \url{https://link.aps.org/doi/10.1103/PhysRevA.76.044101}
\BIBentrySTDinterwordspacing

\bibitem{Bhattacharya2020GeneratingAD}
B.~Bhattacharya, S.~Goswami, R.~Mundra, N.~Ganguly, I.~Chakrabarty,
  S.~Bhattacharya, and A.~S. Majumdar, ``Generating and detecting bound
  entanglement in two-qutrits using a family of indecomposable positive maps.''
  \emph{arXiv: Quantum Physics}, 2020.

\end{thebibliography}

\end{document}